\newtheorem{theorem}{Theorem}
\newtheorem{lemma}{Lemma}
\newtheorem{corollary}{Corollary}
\newtheorem{remark}{Remark}
\theoremstyle{definition}
\newtheorem{definition}{Definition}
\begin{document}
%
\title{PHP-Based Model and Analysis of
	Millimeter Wave Heterogeneous Cellular Network}
%
%
%

\author{Mehdi~Sattari
	and~Aliazam~Abbasfar,~\IEEEmembership{Senior~Member,~IEEE}
    	}

\maketitle

\begin{abstract}
In this paper, a novel Poisson hole process (PHP) modeling of wireless networks is proposed. We consider a more general hole shape, i.e. circular sector holes in a random direction, to capture the spatial separation between tiers in a millimeter wave (mmWave) heterogeneous cellular network (HCN). In this case, small cell BSs (SBSs) and macrocell BSs (MBSs) are distributed as a PHP and Poisson point process (PPP). Due to fundamental propagation differences between mmWave and microwave frequencies, we incorporate blockage and directional beamforming in the network model. Using tools from stochastic geometry, the evaluation of the coverage performance in a two-tier mmWave HCN is provided. Since the exact formulation of distance distribution, association probability and coverage probability based on PHP modeling of SBSs is not known, fairly accurate expressions for them are derived and validated by simulation results. Moreover, some interesting facts about the effect of holes on coverage probability and the relation between proposed hole configuration and prior models with circular holes is discovered. It turns out that the analysis based on the proposed PHP model can provide a more general study than prior works and the proposed approaches are more accurate for two-tier HCNs than independent PPP-based analysis.
\end{abstract}

\begin{IEEEkeywords}
Millimeter wave cellular network, heterogeneous cellular network, stochastic geometry, Poisson Hole Process, coverage probability.
\end{IEEEkeywords}

%
\IEEEpeerreviewmaketitle

\section{Introduction}
%
%
%
%
\IEEEPARstart{T}{he} explosive growth of mobile data traffic in recent years and spectrum shortage in microwave bands have motivated the use of new frequency bands for cellular communication. To that end, millimeter wave (mmWave) communications have been proposed to be an important part of the 5G mobile network to provide high data-rate communication services \cite{6824752},\cite{6515173},\cite{6732923}. Despite massive amounts of bandwidth available in mmWave bands, they have been  traditionally considered only for long distance point-to-point communication such as satellite or short range indoor communications. But recent channel measurements in \cite{6515173} have been revealed that they can be used for commercial cellular systems.

Due to fundamental propagation differences between mmWave and microwave frequencies, it is not possible to use prior models for cellular networks operating in microwave bands. Because of small wavelength and small antenna aperture, path loss in mmWave frequencies is larger than microwave frequencies. On the other hand, it is possible to use  antenna arrays in mmWave transceivers to provide high array gain, by beamforming, and compensate for large path loss.
Another distinguishing feature of mmWave frequencies is sensitivity to blockage. This effect needs to be included in models for analyzing mmWave networks. Channel measurements have revealed that there are noticeable differences in the line of sight (LOS) and non-line of sight (NLOS) links in these bands \cite{6515173}. For this reason, different path loss exponents in LOS and NLOS links are used for system models.
\subsection{Related Work}
Significant progress has been recently made to analyze the cellular networks performance using tools from stochastic geometry. Characterization of signal-to-interference-and-noise ratio (SINR) distribution in cellular networks was pioneered by \cite{6042301}, which provides a mathematical framework for downlink SINR distribution. Stochastic geometry-based analysis for mmWave cellular networks was first done in \cite{6489099}. Incorporating blockage effect and directional beamforming in network analysis of mmWave cellular system was applied in \cite{6932503} and a comprehensive mathematical model was provided. The results in \cite{6932503} revealed that mmWave cellular networks can outperform conventional cellular networks in terms of SINR coverage and average rate. Furthermore, a simplified blockage model named LOS ball model was introduced in \cite{6932503}. It was later extended by \cite{7110547}, in which a tractable model for data-rate was developed in self-backhauled mmWave cellular networks.

Heterogeneous cellular network (HCN) is modeled as a multi-tiered cellular network where the base stations (BSs) of each tier are randomly distributed and have a particular transmit power and spatial density. A general model for HCNs assumes K independent tiers of Poisson point process (PPP) distributed BSs, which is the most simplified model in the literature. Some tractable models for SINR distribution in a general K-tier downlink HCN have been developed in \cite{6171996}, \cite{5743604}, \cite{6287527}. Employing similar approaches from stochastic geometry used in microwave frequency and incorporating the distinguishing features of mmWave frequency, in \cite{7931577} heterogeneous downlink mmWave cellular networks was studied.

Although most of the stochastic geometry works on HCNs focus on PPP-based deployment of BSs, it is unrealistic to assume that the locations of the BSs corresponding to different tiers are completely uncorrelated. Therefore, the effect of spatial separation between tiers are usually overlooked and modeling HCNs accounting for the spatial dependence should be devised. In \cite{7110505} two different cases, inter-tier dependence and intra-tier dependence, are considered for a two-tier HCN. Each of the two cases introduces one form of dependence between the locations of macro and small cells. In such a setup, the small cell locations are modeled by a Poisson hole process (PHP) and Matern cluster process (MCP). A comprehensive analysis for PHP is developed in \cite{7557010}, where the Laplace transform of the interference is derived by considering the local neighborhood around a typical node of a PHP. In \cite{7922493}, lower bounds on the cumulative distribution function (CDF) of the contact distance in a PHP for two different cases are derived and compared with previous approximations.

PHP has also various applications in modeling wireless networks such as device-to-device (D2D) networks and cognitive radio networks. In D2D networks, D2D transmissions have an exclusion region around each transmitter and the remaining active D2D transmitters are modeled by a PHP \cite{7073589}. In cognitive radio networks, a secondary user equipment (UE) may transmit only when it is outside the primary exclusion regions \cite{6155562} to avoid harmful interference to primary UEs. In this case, the spatial distribution of active secondary UEs forms a PHP.

\subsection{Contribution}
In this paper, a novel PHP modeling of wireless networks and some useful insights of this model have been presented. Since mmWave signals are sensitive to blockage and BSs will be deployed in a more dense setting, prior PHP modeling of wireless networks with circular shaped holes may not be a tenable choice to capture spatial separation in mmWave HCNs. For instance, to fill the coverage holes and improve network capacity, an operator may deploy an SBS close to an MBS due to possible blockage. Such differences in mmWave signals motivated us to employ a more general and flexible hole shape in modeling PHP. To meet this demand, We have considered circular sector holes centered at MBS locations and in a random direction. It should be noted that our analysis based on this hole shape can be easily extended to the circular holes, which is the default configuration of holes in the definition of PHP \cite{Haenggi:2012:SGW:2480878}.

To provide a pertinent framework to evaluate the SINR distribution in a mmWave HCN, distinguishing features of mmWave communications, i.e. sensitivity to blockage and directional beamforming in transmitters and receivers, have been incorporated in the network model. The traditional approach to analyze the coverage for such networks is to ignore the holes and model the BSs as homogeneous independent PPP. This is considered as the baseline approach in our paper, and we compare our other proposed approaches with it. To enable downlink analysis of the SINR coverage probability, we first derive fairly accurate analytical expressions of the distance distribution of nearest LOS/NLOS BS to typical UE and association probability. The exact analysis of SINR coverage probability is not possible due to the PHP-based location of SBSs. Therefore, we take some different approaches to tackle this problem that we describe briefly in the following:

\begin{itemize}
	\item In the first approach, we ignore the effect of holes and approximate PHP by the baseline PPP of SBSs and analytical expressions for SINR distribution are derived in a mmWave HCN with a general LOS probability function. This approach actually is the same as when HCN is modeled by homogeneous independent PPP.
	Also as a corollary, we approximate PHP by PPP with the equivalent density. Surprisingly, this simple approximation yields a reliable analysis of mmWave HCN. 
	\item In the second approach, we consider the effect of holes when the typical UE is associated with an MBS and by incorporating the \textit{serving hole}, and an analytical expression for SINR coverage probability is derived. Besides, some new insights by considering the effect of a hole in SINR distribution have been discussed and its relation with circular holes have been discovered. Our results reveal that the analysis has been proposed in this paper, can provide a general study comparing with prior PHP-based modeling of the wireless networks.
	\item In the third approach, we incorporate the effect of nearest non-serving LOS and NLOS holes on SINR distribution. A special case of this analysis can be derived by incorporating nearest non-serving LOS or NLOS holes
	\item In the fourth approach, the effect of all non-serving holes are incorporated, but we ignore possible overlaps between them. Moreover, some interesting details about the effect of LOS and NLOS holes on SINR distribution have been perceived.
\end{itemize}

\subsection{Organization}
The rest of the paper is organized as follows. In section \ref{System Model}, considering distinctive features of mmWave communications, our system model is introduced. In section \ref{Coverage Probability}, distance distribution of nearest LOS/NLOS BS to typical UE, association probability and SINR coverage probability of the network based on different approaches are derived. In Section \ref{Numerical Results}, numerical results are presented to compare and validate different analytical expressions proposed in section \ref{Coverage Probability} and the impact of the system parameters on the performance evaluation are identified. Finally, conclusions and suggestion remarks for future works are provided in section \ref{Conclusion}.
\begin{table}
	\centering
	\caption{Notation and Description}
	\label{Notation and Description}
	\resizebox{.6\textwidth}{!}{
		\begin{tabular}{c|c}
			Symbol & Description \\ \hline
			$ \phi_{1}, \lambda_{1} $ & PPP distribution of MBSs, density of $ \phi_{1} $ \\ \hline
			$ \phi_{2}, \lambda_{2} $ & baseline PPP distribution of SBSs, density of $ \phi_{2} $\\ \hline 
			$ \psi, \lambda_{PHP} $ & PHP distribution of SBSs, density of $ \psi $ \\ \hline
			$ \phi_{u}, \lambda_{u} $ & PPP distribution of UEs, density of $ \phi_{u} $ \\ \hline
			$ BS_{k,0} $ & serving BS \\ \hline
			$ P_{k} $ & transmission power of BS in $ k^{th} $ tier \\ \hline
			$ G_{k} $ & directivity gain of BS in $ k^{th} $ tier \\ \hline
			$ \theta_{k} $ & beamwidth of BS in $ k^{th} $ tier \\ \hline
			$ \theta_{UE} $ & beamwidth of UE \\ \hline
			$ |h_{j,i}|^{2} $ & channel gain between the typical UE and the $ i^{th} $ BS in the $ j^{th} $ tier \\ \hline
			$ \alpha^{s} $  & $ s\in \{LOS,NLOS\} $  path loss exponent \\ \hline
			$ \upsilon^{s} $  & $ s\in \{LOS,NLOS\} $  Nakagami parameter \\ \hline
			$ P^{s}(r) $ & $ s \in \{LOS,NLOS\} $ probability \\ \hline
			$ P_{C} $ & SINR coverage probability \\ \hline
			$ S(x,D,\theta_{c}) $ & circular sector with radius $ D $ and central angle $ \theta_{c} $ centered at $ x $
		\end{tabular}
	}
\end{table}

\section{System Model}\label{System Model}
In this section, we describe our system model for evaluating coverage performance of a two-tier mmWave HCN. A brief overview of notations used in the system model and their descriptions have been summarized in Table \ref{Notation and Description}.
\subsection{BS and UE Locations}
The BSs in all tiers are operating in mmWave band with transmit power $ P_{k} $, $ k=1,2 $. The location of MBSs are distributed as a homogeneous PPP of density $ \lambda_{1} $ and the location of SBSs are treated as a PHP. Each MBS has an exclusion region that is a circular sector centered at the location of the MBS, with radius $ D $, and central angle $ \theta_{c} $. A typical realization of MBSs and SBSs are shown in Fig. \ref{Fig1}. We elaborate more on the properties of a PHP below.
\begin{definition}[Poisson Hole Process]\label{definition 1}
	A PHP is constructed by two independent PPPs: $ \phi_{1}\equiv\{x\} \subset \mathbb{R}^{2}$  of density $ \lambda_{1} $ and $ \phi_{2} $ of density $ \lambda_{2} $. The first process represents the hole centers, i.e. locations of MBSs, and the second process represents the baseline process for SBSs before excluding those who fall in the holes. For each $ x\in\phi_{1}$ we remove all the points in $ \phi_{2} \bigcap S(x,D,\theta_{c}) $, where $ S(x,D,\theta_{c}) $ is a circular sector with radius $ D $ and central angle $ \theta_{c} $ centered at the location of the points in $ \phi_{1} $. Then, the remaining points in $ \phi_{2} $ form the PHP $ \psi $, which can be mathematically defined as
	\begin{equation}
	\psi = \{ y\in\phi_{2}:y\notin \bigcup_{x\in \phi_{1}} S(x,D,\theta_{c})\}
	\end{equation}
\end{definition}
\begin{figure}
	\centering
	\includegraphics[width=0.6\textwidth]{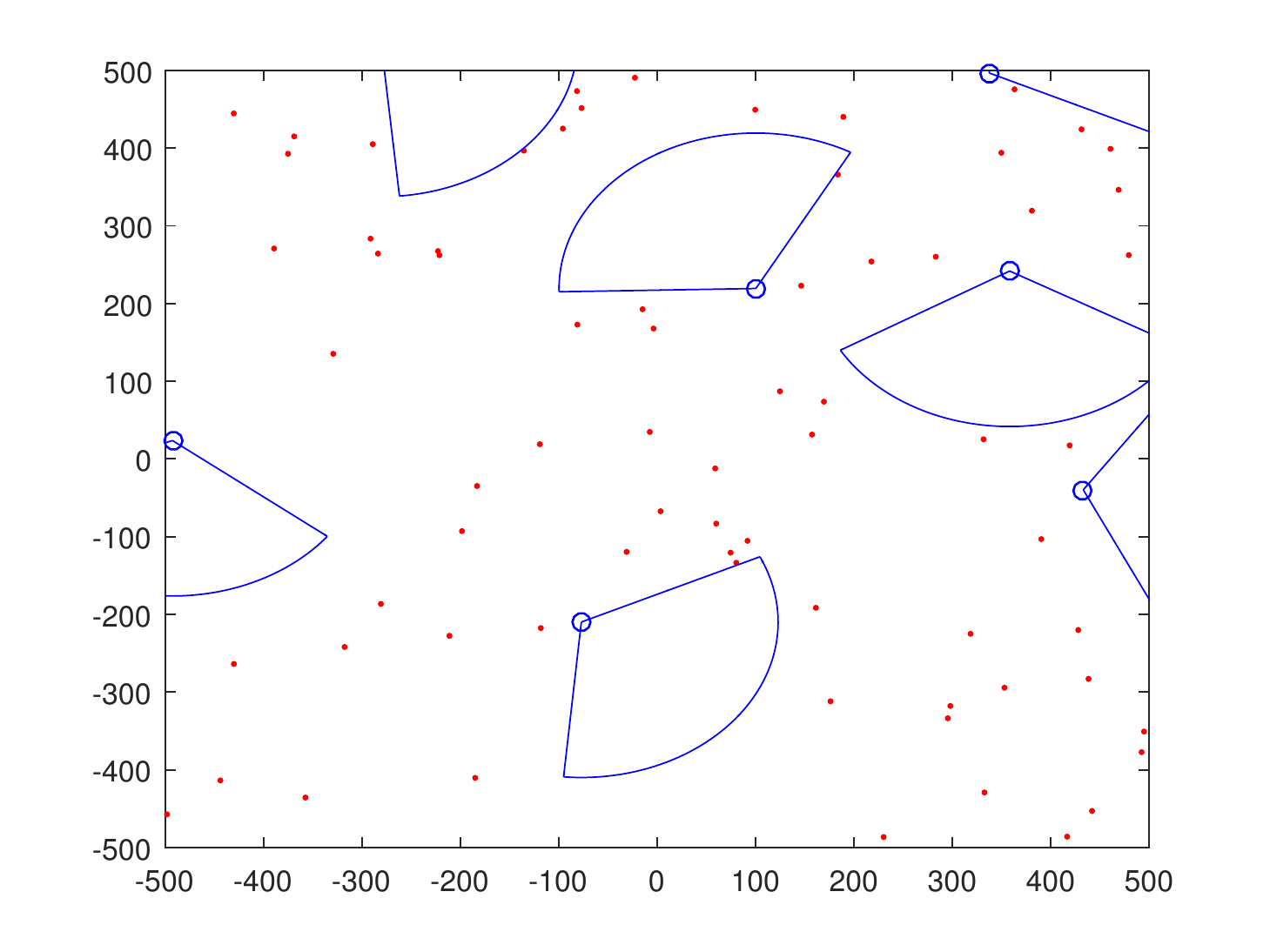}
	\caption{\small Two-tier HCN with spatial separation between tiers. The blue circles are the MBSs and circular sectors are exclusion regions with radius $ D $ and central angle $ \theta_{c} $. The red dots are SBSs deployed outside the exclusion regions.}
	\label{Fig1}
\end{figure}
The UEs are spatially distributed according to another homogeneous PPP and independent of BSs process, which is denoted by $ \phi_{u} $ of density $ \lambda_{u} $. The typical UE is assumed to be at the origin. Due to Slivnyak's theorem, this assumption does not change the statistical properties of the desired distribution \cite{Haenggi:2012:SGW:2480878}.
\subsection{Blocking}
Recent experimental results have revealed a high sensitivity of received  signals to blockage at mmWave frequencies. Therefore, the path loss in LOS and NLOS links are quite different. So, theoretical works in \cite{6932503}, \cite{7110547} have used different path loss exponents to model LOS and NLOS channels. Due to irregular building deployments, the links in the network can be either LOS or NLOS randomly, which can be modeled as a stochastic blockage proposed in \cite{6840343} based on random shape theory. According to this model, a LOS probability is defined as a deterministic non-increasing function of distance denoted as $ P^{LOS}(r)$, where $ r $ is the distance between BSs and the typical UE. For mathematical simplicity, different approximations have been used for LOS probability $ P^{LOS}(r) $. For example, in \cite{6932503} a LOS ball model is introduced, where LOS probability is equal to one inside of a ball and it is zero outside of the ball. There are also more generalized models in some papers such as \cite{7335646}, where piece-wise LOS probability function and multi-ball models are used. In this paper, we employ the general LOS probability function $ P^{LOS}(r) $ for our analysis.
\subsection{Beamforming}
Antenna arrays at MBSs, SBSs, and UEs are assumed to perform directional beamforming. A desirable advantage of such deployment is interference isolation, which reduces the impact of the intercell interference. The typical UE and its serving BS are assumed to have perfect channel state information (CSI) and can adjust the angles of the departure and arrival to achieve the maximum array gains. The beam direction of the interfering links is independently and uniformly distributed in the range $ [0,2\pi) $. We approximate the actual array pattern by the sectored antenna model introduced in \cite{6932503}. Based on this model, the directivity gain $ G_{k} $ in each tier is a discrete random variable with the value $ A_{k,g} $ and probability $ P_{k,g} $ $ (g=1,2,3,4) $, where $ A_{k,g} $ and $ P_{k,g} $ are constants defined in Table \ref{Probability mass function of directivity} and
\begin{table}
	\centering
	\caption{Probability mass function of directivity}
	\label{Probability mass function of directivity}
	\resizebox{0.6\textwidth}{!}{
		\begin{tabular}{|r|c|c|c|c|}
			\hline
			g&1&2&3&4\\ \hline
			$ A_{k,g} $& $ M_{k}M_{UE} $&$ M_{k}m_{UE} $&$ m_{k}M_{UE} $&$ m_{k}m_{UE} $\\ \hline
			$ P_{k,g} $&
			$ c_{k}c_{UE} $&$ c_{k}(1-c_{UE}) $&$ (1-c_{k})c_{UE} $&$ (1-c_{k})(1-c_{UE}) $\\
			\hline
		\end{tabular}
	}
\end{table}
$ c_{k}=\frac{\theta_{k}}{2\pi} $, $ c_{UE}=\frac{\theta_{UE}}{2\pi} $.
Sectored antenna model is the most traditional and tractable approximation for actual array beam pattern that has been adopted in \cite{6932503}, \cite{7110547}, \cite{7105406}, \cite{7931577}.
\subsection{Path Loss and Fading}
As discussed in blocking model, different path loss exponents are applied to LOS and NLOS links. Therefore, we define the path loss for each tier as below:
\begin{equation}
L(r)=
\left\{
\begin{array}{ll}
r^{-\alpha^{LOS}} & w.p.\quad P^{LOS}(r)\\
r^{-\alpha^{NLOS}} & w.p.\quad P^{NLOS}(r)
\end{array}
\right.
\label{pathloss}
\end{equation}
where $ \alpha^{LOS} $ and $ \alpha^{NLOS} $ are LOS and NLOS path loss exponents and $ P^{NLOS}(r) = 1-P^{LOS}(r) $. There are other path loss models such as \cite{7061455}, in which multi-slope path loss models are proposed. However, we use the standard path loss model given in (\ref{pathloss}) in this paper.

Due to minor small-scale fading in mmWave cellular systems, we assume independent Nakagami fading for each link, which is more general than Rayleigh fading used to model fading in conventional cellular networks. Also, different parameters of Nakagami fading $ \nu^{LOS} $ and $ \nu^{NLOS} $ are assumed for LOS and NLOS links and they are assumed to be positive integers for simplicity. Based on Nakagami fading assumption, the channel gain between the UE and BSs are normalized Gamma random variables. Shadowing is ignored in our system model but can be incorporated similar to \cite{7110547}.
\subsection{Cell Association}
We consider the following cell association rule: The typical UE is associated with the strongest BS in terms of long-term averaged received power at the typical UE. We assume that the typical UE is allowed to access either an MBS or an SBS. Therefore, the typical UE is associated with a $ s\in\{LOS,NLOS\} $ BS in $ k^{th} $ tier if and only if
\begin{equation}
P_{k}G_{k}r_{k}^{-\alpha^{s}}> P_{j}G_{j}r_{j}^{-\alpha^{s'}} 
\end{equation}
where $ P_{k} $, $ G_{k} $, $ r_{k}^{-\alpha^{s}} $, $ r_{j}^{-\alpha^{s'}} $ denote the transmission power, directivity gain in the serving link and path loss in $ k^{th} $ and $ j^{th} $ tier for ($ k,j=1,2$) and state $ s,s'\in\{LOS,NLOS\} $. It is worth noting that there are other cell association schemes used in cellular networks, such as instantaneous power based cell association or average biased-power cell association \cite{6287527}, that can be used similarly in our developed network analysis framework.

\section{SINR Coverage Probability}\label{Coverage Probability}

In this section, we evaluate the coverage probability of a two-tier HCN under the system model given in Section \ref{System Model} and derive expressions for SINR distribution in a mmWave network with general LOS probability function $ P^{LOS}(r) $.

In a multi-tier scenario, the serving BS may not always be the nearest BS for a given UE. However, it is obvious that serving BS is one of the closest BS from a given tier. We first derive the probability density function (PDF) of the distance separating the typical UE and the closest LOS/NLOS BS in each tier, which will be used in the upcoming analysis.

\begin{lemma}[Distance to the nearest $ s\in \{LOS,NLOS\} $ BS in $ k^{th} $ tier]\label{lemma_1}
	Given that the typical UE observes at least one $ s\in \{LOS,NLOS\} $ BS in $ k^{th} $ tier, the conditional PDF of distance to the nearest $ s\in \{LOS,NLOS\} $ BS in $ k^{th} $ tier is
	
	\begin{equation}
	f_{k}^{s}(r)= \varLambda'^{s}_{k,PHP}([0,r))exp(-\varLambda^{s}_{k,PHP}([0,r)))/B_{k}^{s}
	\end{equation}
	where
	\begin{equation}
	\varLambda^{s}_{k}([0,r))= 2\pi \lambda_{k}\int_{0}^{r} xP^{s}(x)dx,\quad \varLambda'^{s}_{k}([0,r))=\dfrac{d\varLambda^{s}_{k}([0,r))}{dr}
	\end{equation}
	\begin{equation}
	\varLambda_{1,PHP}^{s}([0,r)) = \varLambda_{1}^{s}([0,r)),\quad \varLambda_{2,PHP}^{s}([0,r)) = 2\pi \lambda_{PHP}\int_{0}^{r} xP^{s}(x)dx
	\end{equation}
	$ B_{k}^{s}=1-exp(-\varLambda^{s}_{k,PHP}([0,\infty))) $ is the probability that the typical UE observes at least one $ s\in \{LOS,NLOS\} $ BS in $ k^{th} $ tier and  $ \lambda_{PHP} = \lambda_{2}exp(-\frac{\lambda_{1} \theta_{c} D^{2}}{2}) $ is the equivalent density of PHP.
	
\end{lemma}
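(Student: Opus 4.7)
The plan is to derive the distance CDFs for each tier separately, combine them under the unified formula in the statement, and finally normalize by the conditioning event $B_k^s$. In both tiers, the starting point is the void probability of a disk of radius $r$ around the typical UE, restricted to BSs in state $s\in\{LOS,NLOS\}$.

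For the MBS tier ($k=1$), since LOS/NLOS states are assigned independently to each link as a function of distance, the $s$-type MBSs form an inhomogeneous PPP with intensity $\lambda_1 P^s(r)$. The standard PPP void-probability formula then gives the CCDF of the distance to the nearest $s$-type MBS as $\exp(-\varLambda_1^s([0,r)))$, where $\varLambda_1^s([0,r))=2\pi\lambda_1\int_0^r xP^s(x)\,dx$ is obtained by integrating the intensity in polar coordinates. Differentiation yields the unconditional PDF, and division by $B_1^s=1-\exp(-\varLambda_1^s([0,\infty)))$ supplies the conditioning on at least one $s$-type MBS being observed.

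For the SBS tier ($k=2$), the process $\psi$ is a PHP, so the simple void-probability identity does not apply directly. The key step is to approximate $\psi$ by a homogeneous PPP of equivalent density $\lambda_{PHP}=\lambda_2\exp(-\lambda_1\theta_c D^2/2)$. This density is computed by evaluating the retention probability of a generic point of $\phi_2$: for a fixed location $y$, the event that a hole centered at $x\in\phi_1$ covers $y$ requires both $\|y-x\|\leq D$ and $y$ to lie in the random sector, the latter having probability $\theta_c/(2\pi)$ because the sector orientation is uniform on $[0,2\pi)$. By Campbell's theorem, the expected number of sectors covering $y$ equals $\lambda_1\cdot\pi D^2\cdot\theta_c/(2\pi)=\lambda_1\theta_c D^2/2$, and since $\phi_1$ is Poisson, the retention probability is $\exp(-\lambda_1\theta_c D^2/2)$. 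With the PHP replaced by a PPP of intensity $\lambda_{PHP}$, the same LOS/NLOS thinning argument as in the MBS tier produces the claimed expression with $\varLambda_{2,PHP}^s([0,r))=2\pi\lambda_{PHP}\int_0^r xP^s(x)\,dx$, and the normalization by $B_2^s$ follows as before.

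The main obstacle is justifying the PPP approximation for the PHP, since $\psi$ has nontrivial spatial correlations (negative in the near-field due to the carved holes) that a PPP cannot capture. The approximation is most accurate when the holes are small relative to the typical nearest-BS distance; the residual error is what makes the lemma only \emph{fairly accurate} rather than exact, and it is validated later by simulation in Section \ref{Numerical Results}. Once this step is accepted, the rest of the derivation is a routine computation: differentiate $1-\exp(-\varLambda_{k,PHP}^s([0,r)))$ in $r$ and divide by $B_k^s$ to obtain the conditional density.
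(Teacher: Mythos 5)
Your proposal is correct and follows essentially the same route the paper sketches: compute the PHP's equivalent density via the retention probability of a baseline point (your Campbell-theorem argument with the $\theta_c/(2\pi)$ orientation factor correctly yields $\lambda_{2}e^{-\lambda_{1}\theta_{c}D^{2}/2}$), then apply the void probability of the LOS/NLOS-thinned inhomogeneous PPP with $\lambda_{2}$ replaced by $\lambda_{PHP}$ and normalize by $B_k^s$. The paper's proof is only a citation to prior work, so your write-up simply supplies the details it omits, including the honest caveat that the PPP surrogate for the PHP is what makes the result approximate.
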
 
\begin{proof}
	The equivalent density of PHP can be easily obtained similar to \cite{7557010} and the proof of PDF is similar to \cite{6932503} and can be derived using the null probability of a 2-D Poisson process and replacing $ \lambda_{2} $ by $ \lambda_{PHP} $.
\end{proof}
Next, we calculate the association probability given in lemma \ref{lemma-A_{k}^{s}}.
\begin{lemma}[Association probability]\label{lemma-A_{k}^{s}}
	The probability that the typical UE is associated with a $ s\in \{LOS,NLOS\} $ BS in $ k^{th} $ tier for $ k = 1,2 $ is
	\begin{equation}
	A_{k}^{s}= \int_{0}^{\infty} \varLambda'^{s}_{k,PHP}([0,x))
	exp\Bigg(-\sum_{j=1}^{2} \sum_{s'\in\{LOS,NLOS\}} \varLambda^{s'}_{j,PHP}([0,R_{j}^{s'}(x)) ) \Bigg)dx 
	\end{equation}
	where
	\begin{equation}
	R_{j}^{s'}(x)=(\frac{P_{j}}{P_{k}}\times \frac{M_{j}}{M_{k}})^{1/\alpha^{s'}}\times x^{\alpha^{s}/\alpha^{s'}} 
	\end{equation}
	is the exclusion radius of the  $ s'\in \{LOS,NLOS\} $ interferer BSs in $ j^{th} $ tier, when typical UE is associated with a $ s\in \{LOS,NLOS\} $ BS in $ k^{th} $ tier.
\end{lemma}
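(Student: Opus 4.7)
The plan is to adapt the standard nearest-BS association argument of \cite{6932503} to the two-tier, LOS/NLOS setting, relying on Lemma~\ref{lemma_1} so that the PHP can be treated as a PPP with equivalent density $\lambda_{PHP}$.

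First, I would fix a tier $k$ and state $s\in\{LOS,NLOS\}$ and condition on the distance $X$ to the nearest $s$-type BS in tier $k$. Dropping the conditioning normaliser $B_k^s$ of Lemma~\ref{lemma_1} (since $A_k^s$ is the unconditional association probability, and the event ``no $s$-type BS in tier $k$'' contributes zero to it), the density of $X$ on $[0,\infty)$ is $\varLambda'^{s}_{k,PHP}([0,x))\exp(-\varLambda^{s}_{k,PHP}([0,x)))$. Given $X=x$, the candidate long-term received power is $P_k M_k x^{-\alpha^{s}}$ under main-lobe alignment on the serving link, and the strongest-BS association rule requires every other $(j,s')$ candidate to be weaker; solving $P_k M_k x^{-\alpha^{s}}=P_j M_j r^{-\alpha^{s'}}$ for $r$ yields exactly the stated exclusion radius $R_j^{s'}(x)$. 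The diagonal term $(j,s')=(k,s)$ returns $R_k^{s}(x)=x$, which combines with the density factor to encode the ``no closer $s$-type BS in tier $k$'' constraint.

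By independent thinning of each tier into LOS and NLOS sub-processes with probabilities $P^{LOS}(r)$ and $1-P^{LOS}(r)$, and by the independence of $\phi_1$ and the PPP approximating $\psi$, the four sub-processes $(j,s')$ are mutually independent inhomogeneous PPPs with intensity measures $\varLambda^{s'}_{j,PHP}$. The void probability of such a process in the ball of radius $R_j^{s'}(x)$ is $\exp(-\varLambda^{s'}_{j,PHP}([0,R_j^{s'}(x))))$, and by independence the joint event that all four exclusion regions are empty factorises into the exponential of the sum. Integrating the product of density and joint void probability over $x\in[0,\infty)$ yields the claimed formula.

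The main obstacle is that the PHP $\psi$ is not strictly a PPP, so its void probabilities are not genuinely Poisson; the whole argument rests on the equivalent-density substitution already invoked in Lemma~\ref{lemma_1}, which is why the result should be understood as ``fairly accurate'' rather than exact. A secondary subtlety is the careful bookkeeping of the $(j,s')=(k,s)$ diagonal term so that the ``nearest $s$-type BS in tier $k$'' condition is counted exactly once and not duplicated by the density factor $\varLambda'^{s}_{k,PHP}([0,x))\exp(-\varLambda^{s}_{k,PHP}([0,x)))$.
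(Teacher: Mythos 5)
Your proposal is correct and follows essentially the same route as the paper's proof in Appendix~\ref{proof of association probability}: condition on the serving distance, derive $R_j^{s'}(x)$ from the main-lobe power-comparison rule, multiply the void probabilities of the independent inhomogeneous PPPs (with the PHP replaced by a PPP of density $\lambda_{PHP}$), and integrate; your handling of the $(j,s')=(k,s)$ diagonal term and the cancellation of $B_k^s$ matches the paper's bookkeeping exactly.
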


\begin{proof}
	See Appendix \ref{proof of association probability}.
\end{proof}

The accuracy of analytical expression in Lemma \ref{lemma_1} and Lemma \ref{lemma-A_{k}^{s}} will be demonstrated in section \ref{Numerical Results}.
Now we are prepared to develop a theoretical framework to analyze the downlink coverage probability for the typical UE. Based on the system model proposed in section \ref{System Model}, SINR received by the typical UE when the typical UE is associated with a $ s\in \{LOS,NLOS\} $ BS in $ k^{th} $ tier can be expressed as
\begin{equation}\label{SINR_{k}^{s}}
SINR_{k}^{s} = \frac{P_{k} |h_{k,0}|^{2} G_{k,0}r_{k}^{-\alpha^{s}}}{\sigma^{2} + I}
\end{equation}
where $ \sigma^{2} $ is thermal noise power and $ |h_{j,i}|^{2} $ is the channel gain between the typical UE and $ i^{th} $ BS in $ j^{th} $ tier, and $ I $ is the cumulative interference that the typical UE experience from all the other BSs, which is
\begin{equation}
	I= \sum_{s'\in{\{LOS,NLOS\}}} \sum_{i\in{\{\phi_{1}^{s'}\} \backslash BS_{k,0}}} P_{1} |h_{1,i}|^{2} G_{1,i}r_{1,i}^{-\alpha^{s'}}
	+\sum_{s'\in{\{LOS,NLOS\}}} \sum_{i\in{\{\psi^{s'}\} \backslash BS_{k,0}}} P_{2} |h_{2,i}|^{2} G_{2,i}r_{2,i}^{-\alpha^{s'}}
\end{equation}

The spatial distribution of the MBSs can be decomposed into two independent non-homogeneous process: the LOS process $ \phi_{1}^{LOS} $ and NLOS process $ \phi_{1}^{NLOS} $. Similarly, we have $ \psi^{LOS} $ and $ \psi^{NLOS} $ for PHP distribution of SBSs. Therefore, the SINR coverage probability $ (P_{C}) $ of the network can be computed using the total probability law as follows
\begin{equation}
P_{C}=Pr(SINR>\tau_{k}) = \sum_{k=1}^{2} \sum_{s\in\{LOS,NLOS\}} A_{k}^{s}P_{C_{k}}^{s}(\tau_{k})
\end{equation}
where $ P_{C_{k}}^{s}(\tau_{k}) $ and $ A_{k}^{s} $ is the conditional coverage and association probability given that the typical UE is associated with a $ s\in \{LOS,NLOS\} $ BS in $ k^{th} $ tier. In the following sections, we calculate coverage probability for a two-tier mmWave HCN using a few different approaches.

\subsection{Approximating PHP by baseline PPP}
In this approach, PHP distribution of SBSs is approximated by baseline PPP $ \phi_{2} $. This approach is the same as modeling SBSs as PPP, which ignores the correlation between the location of BSs of different tiers. Next theorem provides coverage probability based on this approximation.
\begin{theorem}\label{Theorem1}
	The PPP-based coverage probability for a two tier
	mmWave HCN is
	\begin{multline}\label{equation coverage_PPP}
	P_{C} \approx
	\mathlarger\sum_{k=1}^{2} \mathlarger\sum_{s\in\{LOS,NLOS\}} \mathlarger\sum_{n=1}^{\upsilon^{s}} (-1)^{n+1} \binom{\upsilon^{s}}{n} \mathlarger\int_{0}^{\infty} exp(-\mu_{k,n}^{s}\sigma^{2})\\
	exp\Bigg(- \mathlarger\sum_{j=1}^{2} \mathlarger\sum_{s'\in\{LOS,NLOS\}}
	W_{j}^{s'}(x) + \varLambda^{s'}_{j,PHP}([0,R_{j}^{s'}(x))) \Bigg)
	\varLambda'^{s}_{k,PHP}([0,x))dx
	\end{multline}
	where
	\begin{equation}\label{W_{j}^{s'}(x)}
	W_{j}^{s'}(x)=
	\mathlarger\sum_{g=1}^{4} P_{j,g} \mathlarger\int_{R_{j}^{s'}(x)}^{\infty}
	F\Bigg(\upsilon^{s'} , \frac{\mu_{k,n}^{s} P_{j}A_{j,g}r^{-\alpha^{s'}} }{\upsilon^{s'}}\Bigg) 
	\varLambda'^{s'}_{j}([0,r)) dr
	\end{equation}
	and $ F(v,x) = 1-(\frac{1} {1+x})^{v} $, $ \mu_{k,n}^{s} = \frac{n\tau_{k}\eta^{s}}{P_{k}G_{k,0}x^{-\alpha^{s}}} $, $ \eta^{s} = \nu^{s} (\nu^{s}!)^{-1/\nu^{s}} $.
	
\end{theorem}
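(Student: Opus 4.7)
The plan is to compute the conditional coverage probability $P_{C_k}^s(\tau_k)$ given that the typical UE is associated with a state-$s$ BS in tier $k$ at distance $x$, combine it with the PHP-based serving-distance distribution from Lemmas \ref{lemma_1}--\ref{lemma-A_{k}^{s}}, and apply the total probability law $P_C = \sum_{k,s} A_k^s P_{C_k}^s(\tau_k)$. The baseline PPP approximation enters only in the interference Laplace transform, where SBS interferers outside the serving exclusion are modelled as a homogeneous PPP of density $\lambda_2$ rather than the PHP $\psi$.

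Conditioning on the serving distance $r_k = x$, the event $SINR_k^s > \tau_k$ is equivalent to $|h_{k,0}|^2 > \tau_k(\sigma^2 + I)/(P_k G_{k,0} x^{-\alpha^s})$. Since $|h_{k,0}|^2$ is a normalized Gamma variable of integer shape $\nu^s$, I would apply Alzer's upper bound
$$\mathrm{Pr}(|h|^2 > y) \le \sum_{n=1}^{\nu^s}(-1)^{n+1}\binom{\nu^s}{n}e^{-n\eta^s y},$$
with $\eta^s = \nu^s(\nu^s!)^{-1/\nu^s}$. Substituting $y$ and exploiting the independence of $I$ and $|h_{k,0}|^2$ separates the alternating sum, the noise factor $\exp(-\mu_{k,n}^s \sigma^2)$, and the Laplace transform $\mathcal{L}_I(\mu_{k,n}^s)$ appearing in the target expression.

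For $\mathcal{L}_I$ I would factor $I$ across the four independent processes $\phi_1^{LOS}, \phi_1^{NLOS}, \phi_2^{LOS}, \phi_2^{NLOS}$ obtained by independent LOS/NLOS thinning (with $\phi_2$ replacing $\psi$ under the approximation). The interferers of type $(j,s')$ lie outside the exclusion radius $R_j^{s'}(x)$ of Lemma \ref{lemma-A_{k}^{s}}. Applying the PGFL of a non-homogeneous PPP of intensity $\lambda_j P^{s'}(r)$, averaging over the Gamma fading of each interferer yields factors $(1+\mu_{k,n}^s P_j G r^{-\alpha^{s'}}/\nu^{s'})^{-\nu^{s'}}$, and averaging over the four-valued directivity gain gives the weighted sum $\sum_g P_{j,g}(\cdot)$. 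Recognising the complement as $F(\nu^{s'},\cdot)$ produces $\mathcal{L}_{I_j^{s'}}(\mu_{k,n}^s) = \exp(-W_j^{s'}(x))$.

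Finally, the joint density $A_k^s\,f_k^s(x\,|\,\text{association})$ equals $\varLambda'^s_{k,PHP}([0,x))\exp(-\sum_{j,s'}\varLambda^{s'}_{j,PHP}([0,R_j^{s'}(x))))$: for $(j,s')=(k,s)$ the exclusion radius $R_k^s(x)=x$ reproduces the survival factor from Lemma \ref{lemma_1}, while the remaining $(j,s')\ne(k,s)$ contribute the void probabilities from Lemma \ref{lemma-A_{k}^{s}}. Multiplying by the conditional coverage and summing over $(k,s)$ yields (\ref{equation coverage_PPP}). The trickiest book-keeping step will be keeping the PHP densities $\varLambda^{s'}_{j,PHP}$ used for association exclusion cleanly separate from the baseline densities $\varLambda'^{s'}_{j}$ used in the interference integrand, so that the two density conventions do not get mixed in the exponent and the Alzer sum, the PGFL-based interference factor, and the void probabilities assemble into the claimed formula without double counting.
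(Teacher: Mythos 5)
Your proposal is correct and follows essentially the same route as the paper's proof: the total-probability decomposition over $(k,s)$, the Alzer-type alternating-sum bound on the normalized Gamma tail giving $\mu_{k,n}^{s}$ and $\eta^{s}$, the PGFL of the LOS/NLOS-thinned PPPs outside $R_j^{s'}(x)$ combined with the Gamma moment generating function and the four-valued gain average to produce $W_j^{s'}(x)$, and the recombination of $A_k^s$ with the conditional serving-distance density so that the $(j,s')=(k,s)$ void term absorbs the survival factor of Lemma \ref{lemma_1}. You also correctly flag the one genuinely delicate point, namely that the association exponent uses the PHP intensities $\varLambda^{s'}_{j,PHP}$ while the interference integrand in $W_j^{s'}$ uses the baseline $\varLambda'^{s'}_{j}$, which is exactly how the paper assembles the formula.
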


\begin{proof}
	See Appendix \ref{proof of SINR coverage probability-PPP}.
\end{proof}
\begin{corollary}\label{corollary1}
	As a consequence, using independent thining of a PPP, we approximate PHP distribution of SBSs by a PPP with $ \lambda_{PHP} $. In this case, it is enough to replace $ \varLambda'^{s'}_{j}([0,r))$ with $ \varLambda'^{s'}_{j,PHP}([0,r)) $ in the equation \ref{W_{j}^{s'}(x)} in Theorem \ref{Theorem1}.
\end{corollary}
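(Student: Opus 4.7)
The plan is to re-run the derivation of Theorem \ref{Theorem1} verbatim, but with the PHP $\psi$ of SBSs replaced by an independently thinned version of $\phi_{2}$ having retention probability $\exp(-\lambda_{1}\theta_{c}D^{2}/2)$. Since independent thinning of a PPP is itself a PPP with the reduced density $\lambda_{PHP} = \lambda_{2}\exp(-\lambda_{1}\theta_{c}D^{2}/2)$ (this density matches the equivalent density already established in Lemma \ref{lemma_1}), all of the PPP machinery used in the proof of Theorem \ref{Theorem1} -- distance distribution, association probability, and the probability generating functional (PGFL) for the Laplace transform of interference -- applies directly with $\lambda_{2}$ replaced by $\lambda_{PHP}$ wherever the tier-2 process enters.

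First I would trace through the expression (\ref{equation coverage_PPP}) and identify each occurrence of the baseline intensity. The outer factors $\varLambda'^{s}_{k,PHP}([0,x))$ and $\varLambda^{s'}_{j,PHP}([0,R_{j}^{s'}(x)))$ come from Lemmas \ref{lemma_1} and \ref{lemma-A_{k}^{s}}, both of which are already written in terms of the PHP equivalent density, so they require no modification. The only occurrence of the non-thinned intensity $\varLambda'^{s'}_{j}([0,r))$ is inside the integrand of $W_{j}^{s'}(x)$ in (\ref{W_{j}^{s'}(x)}), which originates from applying the PGFL to the LOS/NLOS subprocesses of tier $j$ when computing the Laplace transform of the aggregate interference.

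Next I would invoke the PGFL under the thinned model. For tier $j=2$, the LOS/NLOS subprocesses become independent PPPs with intensity $\lambda_{PHP}P^{s'}(r)$, so the differential intensity measure evaluated at radius $r$ becomes
\begin{equation}
\varLambda'^{s'}_{2,PHP}([0,r)) = 2\pi\lambda_{PHP}\,r\,P^{s'}(r),
\end{equation}
replacing $\varLambda'^{s'}_{2}([0,r)) = 2\pi\lambda_{2}\,r\,P^{s'}(r)$. For tier $j=1$ the identity $\varLambda'^{s'}_{1,PHP} = \varLambda'^{s'}_{1}$ already holds by definition in Lemma \ref{lemma_1}, so the substitution is vacuous there. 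Consequently, replacing $\varLambda'^{s'}_{j}$ by $\varLambda'^{s'}_{j,PHP}$ inside $W_{j}^{s'}(x)$ is exactly the required change, and the rest of the derivation (the binomial expansion of the Gamma CDF, independence of the fading and beam gains, and the change of variables leading to the outer integral) goes through unchanged.

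There is no substantive obstacle: the corollary is essentially a notational consequence of the fact that the same PPP-based analysis, when applied with the smaller density $\lambda_{PHP}$, yields the stated substitution. The only conceptual caveat worth recording in the proof is that this thinning approximation discards the spatial correlations induced by the holes, which is precisely the assumption under which Corollary \ref{corollary1} is stated; its accuracy is then validated numerically in Section \ref{Numerical Results}.
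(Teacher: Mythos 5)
Your proposal is correct and follows essentially the same route the paper implies: since an independently thinned PPP is again a PPP with density $\lambda_{PHP}=\lambda_{2}\exp(-\lambda_{1}\theta_{c}D^{2}/2)$, the entire PGFL-based derivation of Theorem \ref{Theorem1} carries over with $\lambda_{2}$ replaced by $\lambda_{PHP}$, and you correctly identify that the only place the un-thinned intensity survives in (\ref{equation coverage_PPP}) is inside $W_{j}^{s'}(x)$ (the substitution being vacuous for $j=1$). The paper gives no separate proof of the corollary, treating it exactly as this notational consequence, so nothing is missing from your argument.
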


\subsection{Incorporating serving hole}
To provide more accurate coverage probability, we incorporate only the effect of the serving hole in our analysis. The term serving hole used in this paper represents the hole that is associated with the serving MBS. Since serving MBS is the closest LOS/NLOS MBS to the typical UE, serving hole is the nearest LOS/NLOS hole to the typical UE. However, in the case when the typical UE is associated with a LOS MBS, it is possible that some non-serving NLOS MBSs exist who are closer to the typical UE, hence, their holes are closer than serving hole as well.
\begin{theorem}\label{Theorem3}
	The coverage probability by incorporating serving hole can be evaluated as
\begin{multline}\label{theorem2}
	P_{C} \approx
	\mathlarger\sum_{k=1}^{2} \mathlarger\sum_{s\in\{LOS,NLOS\}} \mathlarger\sum_{n=1}^{\upsilon^{s}} (-1)^{n+1} 	\binom{\upsilon^{s}}{n} \mathlarger\int_{0}^{\infty} exp(-\mu_{k,n}^{s}\sigma^{2})
	exp\Bigg(- \mathlarger\sum_{j=1}^{2} \mathlarger\sum_{s'\in\{LOS,NLOS\}}\\
	W_{j}^{s'}(x) + \varLambda^{s'}_{j,PHP}([0,R_{j}^{s'}(x))) \Bigg)
	exp\Bigg(\mathlarger\sum_{s'\in\{LOS,NLOS\}} Q^{s'}(x) \mathbbm{1}(k=1)\Bigg) \varLambda'^{s}_{k,PHP}([0,x)) dx
\end{multline}
	
	where
\begin{equation}\label{Q}
\begin{split}
		Q^{s'}(x) =& \frac{\theta_{c}}{2\pi} \lambda_{2} \mathlarger\sum_{g=1}^{4} P_{2,g} \mathlarger\int_{0}^{2\pi} \mathlarger\int_{0}^{D} F\Bigg(\upsilon^{s'} , \frac{\mu_{k,n}^{s} P_{2}A_{2,g}(u^{2}+x^{2}-2uxcos(\varphi))^{-\alpha^{s'}/2} }{\upsilon^{s'}}\Bigg) \\
		&P^{s'}((u^{2}+x^{2}-2uxcos(\varphi))^{1/2}) udud\varphi\\
		=&\mathlarger\sum_{g=1}^{4} P_{2,g} \mathlarger\int_{x-D}^{x+D} F\Bigg(\upsilon^{s'} , \frac{\mu_{k,n}^{s} P_{2}A_{2,g}u^{-\alpha^{s'}} }{\upsilon^{s'}}\Bigg) 2\pi\lambda(u)
		P^{s'}(u) udu
\end{split}
\end{equation}
	
where $ \lambda(u) = \lambda_{2}\Bigg(\frac{\theta_{c}}{2\pi}\times \frac{\arccos(\frac{u^2+x^2-D^2}{2ux})}{\pi}\Bigg) $ and $ \mathbbm{1}(.) $ is the indicator function.
\end{theorem}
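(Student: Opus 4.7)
The strategy is to follow the derivation of Theorem \ref{Theorem1} almost verbatim (as done in Appendix \ref{proof of SINR coverage probability-PPP}) and to modify only the Laplace transform of the SBS interference when the typical UE is associated with an MBS (i.e., when $k=1$). After Alzer's inequality is applied to the Gamma tail of the channel gain $|h_{k,0}|^{2}$, the conditional coverage $P_{C_{k}}^{s}(\tau_k)$ reduces to the same alternating sum over $n$ of PPP-type Laplace transforms that appears in Theorem \ref{Theorem1}; the MBS-interference factor is identical to the $j=1$ piece in Theorem \ref{Theorem1}, so only the SBS-interference factor for $k=1$ needs new work.

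Under the ``incorporating serving hole'' approximation, I would model the SBS point process as the independent $\tfrac{\theta_{c}}{2\pi}$-thinning of the baseline PPP $\phi_{2}$ inside the disc of radius $D$ around the serving MBS $x_{1}$, and as $\phi_{2}$ outside that disc. This is the independent-thinning surrogate for drawing the sector direction $\theta_{0}$ uniformly on $[0,2\pi)$: for any fixed point $y$ with $|y-x_{1}|\le D$ one has $\mathbb{P}(y\in S(x_{1},D,\theta_{c}))=\tfrac{\theta_{c}}{2\pi}$. The resulting intensity is $\lambda_{2}\bigl(1-\tfrac{\theta_{c}}{2\pi}\mathbbm{1}(|y-x_{1}|\le D)\bigr)$, and the PPP Laplace functional then yields an SBS-interference exponent of $-W_{2}^{s'}(x)+Q^{s'}(x)$, where $Q^{s'}(x)$ is the integral of $(1-\mathbb{E}_{h}[\exp(-\mu P_{2}Gh r^{-\alpha^{s'}})])P^{s'}(r)$ over the disc against the thinning density $\tfrac{\theta_{c}}{2\pi}\lambda_{2}$. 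Closing the fading expectation via $1-\mathbb{E}_{h}[e^{-ah}]=F(\upsilon^{s'},a/\upsilon^{s'})$ recovers the integrand of \eqref{Q}.

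For the two equivalent representations of $Q^{s'}(x)$ in \eqref{Q}, I would evaluate the disc integral first in MBS-centered polar coordinates $(u,\varphi)$ with $u\in[0,D]$ and $\varphi$ measured from the MBS--UE axis, so that the UE-to-point distance equals $(u^{2}+x^{2}-2ux\cos\varphi)^{1/2}$ by the law of cosines; this produces the first line of \eqref{Q}. Switching to UE-centered polar coordinates $(r,\psi)$ and using the geometric fact that the arc of the circle of radius $r$ around the UE lying inside the disc of radius $D$ around the MBS has total angular extent $2\arccos\bigl((r^{2}+x^{2}-D^{2})/(2rx)\bigr)$ for $r\in[|x-D|,x+D]$, the angular integral collapses into the effective radial density $\lambda(u)$ and delivers the second line of \eqref{Q}. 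Because no serving hole exists when the UE associates with an SBS, the correction is gated by $\mathbbm{1}(k=1)$; reinserting $\mathcal{L}_{I_{MBS}}\cdot\mathcal{L}_{I_{SBS}}$ into the Alzer expansion produces \eqref{theorem2}.

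The main obstacle is the justification of the independent-thinning surrogate. In the true PHP the events ``SBS $y_{1}$ falls in the sector'' and ``SBS $y_{2}$ falls in the sector'' are strongly positively correlated through the common random direction $\theta_{0}$, so $\mathbb{E}_{\theta_{0}}[\mathcal{L}_{I_{SBS}}(\mu\mid\theta_{0})]$ is only approximately equal to the Laplace functional of the thinned PPP; this Jensen-type gap is precisely why \eqref{theorem2} is stated with ``$\approx$''. A fully rigorous version would require either a Campbell-style expansion that keeps $\theta_{0}$ as a mixing variable or a quantitative bound on $\log\mathbb{E}_{\theta_{0}}[e^{-X(\theta_{0})}] + \mathbb{E}_{\theta_{0}}[X(\theta_{0})]$, where $X(\theta_{0})$ is the sector-indicator-weighted Campbell sum; the accuracy of the surrogate would then need to be checked numerically, as the paper plans to do in Section \ref{Numerical Results}.
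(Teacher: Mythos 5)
Your proposal is correct and follows essentially the same route as the paper's Appendix C: reuse the Theorem \ref{Theorem1} machinery, add back the serving hole's contribution to the SBS Laplace transform, and reduce the sector to a $\tfrac{\theta_{c}}{2\pi}$-weighted disc integral via the law of cosines and the $\arccos$ arc-length identity. The only difference is one of framing: where you invoke an independent-thinning surrogate and explicitly flag the Jensen-type gap caused by the common sector direction, the paper obtains the identical formula by averaging the exponent (rather than the exponential) over the uniformly distributed direction $t$ --- the same approximation left implicit --- together with the region approximation $\Xi = S(x,D,\theta_{c})\cap B^{c}(0,R_{2}^{s'}(x)) \approx S(x,D,\theta_{c})$, which you also make tacitly by integrating over the full disc.
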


\begin{proof}
	See Appendix \ref{Proof of SINR Coverage Probability-PHP_serving hole}.
\end{proof}

\begin{remark}
	The above analytical expressions reveal that the equivalent area of circular sector holes in random direction with radius D and central angle $ \theta_{c} $ is $ \frac{\theta_{c}}{2\pi} \times$ area of circular holes with radius D. So, the Laplace transform of interference distribution with considering circular sector holes in random direction appears approximately as $ exp(\frac{\theta_{c}}{2\pi})\times L_{I_{C}} $, where $  L_{I_{C}} $ is the Laplace transform of interference distribution with circular holes. A more interesting interpretation of the above result is that PHP distribution of SBSs with a circular sector hole in random direction and at distance $ x $ is approximately a non-homogenous PPP with density $ \lambda_{2}\Bigg(1-\frac{\theta_{c}}{2\pi}\frac{\arccos(\frac{u^2+x^2-D^2}{2ux})}{\pi}\Bigg) $. This result is similar to \cite{7557010}, which is a special case of ours.  Clearly, the computational complexity by such setup of holes is not increased compared to traditional circular holes. Moreover, it is worth noting that the two-fold integral in equation \ref{Q} can be applied to any hole shape. To summarize, the analysis in this paper can provide a more general study comparing with prior PHP modeling of the wireless networks.
\end{remark}

\subsection{Incorporating nearest non-serving LOS and NLOS holes}
By ignoring all the holes except the nearest non-serving LOS and NLOS ones, we evaluate the coverage performance of the network. In other words, in this approach, we consider the nearest holes that are associated with the interferer MBSs, i.e. the MBSs who are not serving the typical UE, and therefore, the effect of the serving hole has not taken into account in this approach. Note that the term LOS/NLOS hole used in this paper represents the hole that is associated with the LOS/NLOS MBS. 
\begin{theorem}\label{Theorem4}
	The coverage probability by incorporating the nearest non-serving LOS and NLOS holes can be expressed as
	\begin{multline}
	P_{C} \approx
	\mathlarger\sum_{k=1}^{2} \mathlarger\sum_{s\in\{LOS,NLOS\}} \mathlarger\sum_{n=1}^{\upsilon^{s}} (-1)^{n+1} 		\binom{\upsilon^{s}}{n} \mathlarger\int_{0}^{\infty} exp(-\mu_{k,n}^{s}\sigma^{2})\\
	exp\Bigg(- \mathlarger\sum_{j=1}^{2} \mathlarger\sum_{s'\in\{LOS,NLOS\}}
	W_{j}^{s'}(x) + \varLambda^{s'}_{j}([0,R_{j}^{s'}(x))) \Bigg)
	  Z(x) \varLambda'^{s}_{k}([0,x)) dx
	\end{multline}
	where
	\begin{multline}
	Z(x)=
	\mathlarger\prod_{s''\in\{{LOS,NLOS}\}}\mathlarger\int_{R_{1}^{s''}(x)}^{\infty} 
	exp \Bigg(\mathlarger\sum_{s'\in\{{LOS,NLOS}\}} Q^{s'}(y)\Bigg)\\
	\frac{\varLambda'^{s''}_{1}([0,y)) exp\Bigg(-\varLambda_{1}^{s''}([0,y))+\varLambda_{1}^{s''}([0,R_{1}^{s''}(x)))\Bigg)}
	{1-exp\Bigg(-\varLambda_{1}([R_{1}^{s''}(x),\infty))\Bigg)} dy
	\end{multline}
\end{theorem}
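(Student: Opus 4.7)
The plan is to build directly on the derivations of Theorems \ref{Theorem1} and \ref{Theorem3} by refining only the correction terms that account for the PHP nature of $\psi$. First I would decompose the coverage probability over the tier $k$ and type $s \in \{LOS, NLOS\}$ of the serving BS via the total probability law, then condition on the serving-BS distance $x$ using Lemma \ref{lemma_1}. Applying the Nakagami-$\nu^s$ tail approximation to the normalized gamma received-signal power and invoking the Alzer inequality produces the familiar alternating binomial sum $\sum_{n=1}^{\nu^s} (-1)^{n+1} \binom{\nu^s}{n}$, which reduces the conditional coverage to the Laplace transform of the aggregate interference $I$ evaluated at $\mu_{k,n}^{s}$.

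Next I would split the Laplace transform of $I$ into an MBS factor and an SBS factor, exploiting independence of $\phi_1$ and $\phi_2$. The MBS factor is handled exactly as in Theorem \ref{Theorem1} via the probability generating functional of the inhomogeneous PPPs $\phi_1^{LOS}, \phi_1^{NLOS}$, producing $\exp\bigl(-W_1^{s'}(x) + \varLambda_1^{s'}([0,R_1^{s'}(x)))\bigr)$. For the SBS factor I would start from the baseline PPP $\phi_2$, giving the analogous $\exp\bigl(-W_2^{s'}(x) + \varLambda_2^{s'}([0,R_2^{s'}(x)))\bigr)$, and then reinstate the missing PHP structure by multiplying by $\exp\bigl(Q^{s'}(y)\bigr)$ for each hole we wish to track. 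Following the derivation of Theorem \ref{Theorem3}, each such factor is precisely the Laplace transform of the excess interference that the PPP approximation spuriously attributes to the points of $\phi_2$ lying inside a single circular-sector hole of radius $D$ centered at distance $y$ from the typical UE.

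The new ingredient is that these corrections are now attached to the nearest non-serving LOS and NLOS MBSs, whose distances $y_{LOS}$ and $y_{NLOS}$ are themselves random and must exceed the exclusion radii $R_1^{LOS}(x)$ and $R_1^{NLOS}(x)$ (since those MBSs are not the serving one). Using the independent blockage marking of $\phi_1$ into $\phi_1^{LOS}$ and $\phi_1^{NLOS}$, the two distances become independent, which yields the product over $s'' \in \{LOS, NLOS\}$ in $Z(x)$. For each $s''$ I would then average $\exp\bigl(\sum_{s'} Q^{s'}(y)\bigr)$ against the contact-distance density of $\phi_1^{s''}$ truncated to $[R_1^{s''}(x), \infty)$, namely $\varLambda'^{s''}_1([0,y)) \exp\bigl(-\varLambda_1^{s''}([0,y)) + \varLambda_1^{s''}([0,R_1^{s''}(x)))\bigr)$ normalized by the existence probability $1 - \exp(-\varLambda_1([R_1^{s''}(x),\infty)))$; this quotient is exactly what appears inside the $Z(x)$ integrand.

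The hard part will be justifying the independence structure that makes the product form legitimate. Strictly, conditioned on the serving BS at distance $x$, the positions of the nearest non-serving LOS and NLOS MBSs are coupled through the Palm distribution of $\phi_1$, and the two circular-sector holes can overlap either with each other or with the (here deliberately ignored) serving hole. Since Theorem \ref{Theorem4} only tracks two non-serving holes and explicitly disregards overlaps and the serving hole, the resulting expression is an approximation rather than an identity, reflected by the $\approx$ sign; the underlying manipulation can nonetheless be framed cleanly via the reduced Palm measure of the marked MBS process together with the displacement theorem for the independent $\phi_2$. The most error-prone bookkeeping will be keeping the PPP vs.\ PHP versions of $\varLambda$ consistent across the MBS, SBS, and hole-correction factors, and making sure every integration limit correctly encodes the ``non-serving'' constraint.
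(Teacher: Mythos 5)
Your proposal follows essentially the same route as the paper's proof: the alternating binomial expansion from the Nakagami approximation, the PGFL-based PPP factors $W_j^{s'}$, the add-back correction $\exp(Q^{s'}(y))$ for the excess interference inside each tracked hole, and the averaging of that correction over the truncated contact-distance density of the nearest non-serving $s''$ MBS beyond $R_1^{s''}(x)$, which is exactly how the paper assembles $Z(x)$. The approximations you flag (ignoring hole overlaps, the serving hole, and the Palm coupling) are the same ones the paper makes, so the argument is sound as a derivation of the stated approximation.
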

\begin{proof}
	See Appendix \ref{Proof of SINR Coverage Probability-PHP_nearest_holes}.
\end{proof}

\subsection{Incorporating all non-serving holes}
In this approach, we provide an analytical expression for the coverage probability by incorporating all non-serving holes. However, here the overlaps between them are ignored. Due to possible overlaps among holes, there are some points that will be removed multiple times. Note that similar to Theorem \ref{Theorem4}, the effect of the serving hole has not been considered in this approach.
\begin{theorem}\label{Theorem5}
	The coverage probability by incorporating all non-serving holes is
	\begin{multline}\label{eq-all-holes}
	P_{C} \approx
	\mathlarger\sum_{k=1}^{2} \mathlarger\sum_{s\in\{LOS,NLOS\}} \mathlarger\sum_{n=1}^{\upsilon^{s}} (-1)^{n+1} 		\binom{\upsilon^{s}}{n} \mathlarger\int_{0}^{\infty} exp(-\mu_{k,n}^{s}\sigma^{2})\\
	exp\Bigg(- \mathlarger\sum_{j=1}^{2} \mathlarger\sum_{s'\in\{LOS,NLOS\}}
	W_{j}^{s'}(x) + \varLambda^{s'}_{j}([0,R_{j}^{s'}(x))) \Bigg)
	T(x) \varLambda'^{s}_{k}([0,x)) dx
	\end{multline}
	where
	\begin{equation}\label{T(x)}
	T(x) = exp\Bigg(- \mathlarger\sum_{s''\in\{{LOS,NLOS}\}}\mathlarger\int_{R_{1}^{s''}(x)}^{\infty} \mathlarger\sum_{s'\in\{{LOS,NLOS}\}}  \Bigg(1-exp(Q^{s'}(y))\Bigg)\varLambda_{1}^{s''}([0,y))dy\Bigg)
	\end{equation}
\end{theorem}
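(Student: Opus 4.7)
The plan is to express the coverage probability as in Theorem \ref{Theorem1} with the baseline PPP intensity $\lambda_2$ (not $\lambda_{PHP}$) for SBSs, and then to introduce a single multiplicative factor $T(x)$ that compensates for the removal of SBSs lying inside the hole of every non-serving MBS. The key building block, already derived inside the proof of Theorem \ref{Theorem3}, is that a single circular-sector hole at distance $y$ from the typical UE contributes the factor $\exp(Q^{s'}(y))$ to the interference Laplace transform for each interferer polarity $s'\in\{LOS,NLOS\}$; this is precisely the integral in equation (\ref{Q}), now evaluated at a non-serving MBS distance $y$ in place of the serving distance $x$. This observation is what allows the serving-hole computation from Theorem \ref{Theorem3} to be recycled.

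First I would repeat the opening of Theorem \ref{Theorem1}: condition on the serving BS being of type $s$ in tier $k$ at distance $x$, apply the Alzer-style binomial expansion to the Gamma CCDF of the desired signal, and reduce the inner expectation to $\exp(-\mu_{k,n}^{s}\sigma^{2})\,\mathbb{E}[\exp(-\mu_{k,n}^{s} I)]$. I would then split the tier-1 process into its independent LOS/NLOS thinnings $\phi_{1}^{LOS}$ and $\phi_{1}^{NLOS}$, and for each non-serving MBS at distance $y_i \geq R_{1}^{s''}(x)$ insert the multiplicative correction $\prod_{s'}\exp(Q^{s'}(y_i))$ into the Laplace-transform factor of the SBS interference, mirroring how the serving hole was handled in Theorem \ref{Theorem3}. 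Under the stated approximation that hole overlaps are discarded, these corrections factor across MBSs, so the probability generating functional of the non-homogeneous PPP of MBSs converts the product into the exponential of an integral, delivering $T(x)$ after summing over $s''$ and $s'$. The remaining pieces (the $W_{j}^{s'}$ and $\varLambda_{j}^{s'}([0,R_{j}^{s'}(x)))$ terms, as well as the outer integral weighted by $\varLambda'^{s}_{k}([0,x))$ from Lemma \ref{lemma_1}) are obtained by the same PGFL and nearest-BS arguments that drive Theorem \ref{Theorem1}.

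The step I expect to be the main obstacle is justifying the factorisation of the per-hole corrections, since strictly the excised sectors of different MBSs can overlap and the same SBS may be ``removed'' by several holes at once. This double-counting is exactly what the $\approx$ sign absorbs: the rigorous quantity would be $\mathbb{E}[\exp(Q_{\cup}(y_1,y_2,\ldots))]$, where $Q_{\cup}$ integrates only over the union of the holes, whereas the proof substitutes a product of per-hole contributions, which counts each SBS that lies in several holes multiple times. A secondary bookkeeping point is that, when $k=1$, one must exclude the serving MBS's hole from the non-serving set, which is handled automatically by the lower integration limit $R_{1}^{s''}(x)$; since the serving hole itself is explicitly dropped in this approach (unlike in Theorem \ref{Theorem3}), no $\mathbbm{1}(k=1)$ term survives. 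Once the overlap approximation is accepted, the remainder of the argument is a direct assembly of ingredients already established in Theorems \ref{Theorem1} and \ref{Theorem3}, yielding equation (\ref{eq-all-holes}).
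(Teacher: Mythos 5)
Your proposal matches the paper's own derivation: both start from the baseline-PPP Laplace transform of Theorem \ref{Theorem1}, attach the per-hole correction factor $\prod_{s'}\exp(Q^{s'}(y))$ (recycled from the serving-hole computation of Theorem \ref{Theorem3}) to each non-serving MBS, approximate the union of holes by a sum so the corrections factor across MBSs, and then apply the PGFL of $\phi_{1}^{s''}$ over $[R_{1}^{s''}(x),\infty)$ to obtain $T(x)$. You also correctly identify the two approximations the paper invokes (ignoring hole overlaps and replacing $\Gamma\cap B^{c}(0,R_{2}^{s'}(x))$ by $\Gamma$), so the argument is essentially identical to the paper's.
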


\begin{proof}
	See Appendix \ref{Proof of SINR Coverage Probability-PHP_all_holes}.
\end{proof}
\begin{remark}\label{Remark2}
	The above analytical expressions in Theorem \ref{Theorem4} and \ref{Theorem5} incorporates both LOS and NLOS holes. A special case can be obtained by incorporating only LOS or NLOS holes. The interesting fact about LOS and NLOS holes is that NLOS holes have a dominant effect on the received signal. This is because of two reasons: first, in the dense building environment like urban areas, the density of NLOS links are much more than LOS links. Therefore, the average number of NLOS holes are much more than LOS holes and LOS holes have less impact on SINR distribution than NLOS holes.. 
	To expand mathematically on this claim, we derive the ratio of the average number of NLOS BSs to LOS BSs,
	\begin{equation*}
	\begin{split}
	\frac{\rho^{NLOS}}{\rho^{LOS}} &= \frac{2\pi\lambda\int_{0}^{\infty}r(1-P^{LOS}(r))dr}{2\pi\lambda\int_{0}^{\infty}rP^{LOS}(r)dr}
	\end{split}
	\end{equation*}
	if $ \int_{0}^{\infty}rP^{LOS}(r)dr<\infty $, $ \frac{\rho^{NLOS}}{\rho^{LOS}} = \infty $. But if $ \int_{0}^{\infty}rP^{LOS}(r)dr<\infty $ is not satisfied,
	\begin{equation*}
	\begin{split}
	\frac{\rho^{NLOS}}{\rho^{LOS}} &= \frac{2\pi\lambda\int_{0}^{\infty}r(1-P^{LOS}(r))dr}{2\pi\lambda\int_{0}^{\infty}rP^{LOS}(r)dr} = \frac{\int_{0}^{\infty}rdr}{\int_{0}^{\infty}rP^{LOS}(r)dr} - 1 = \lim_{x\to\infty}\frac{\int_{0}^{x}rdr}{\int_{0}^{x}rP^{LOS}(r)dr} - 1 \\&= \lim_{x\to\infty} \frac{x}{xP^{LOS}(x)} - 1 = \lim_{x\to\infty} \frac{1}{P^{LOS}(x)} - 1
	\end{split}
	\end{equation*}
	If we assume that in long enough distances, $ P^{LOS}(x) $ is sufficiently close to $ 0 $, the above ratio goes to $ \infty $.
	Second, the exclusion radius of LOS MBSs is almost surely much bigger than NLOS BSs and Since received signal solely depends upon the distance between the transmitter and receiver, the effect of LOS holes on the received signal is less than the effect of NLOS holes. Mathematically,
	\begin{equation*}
	\frac{R_{1}^{LOS}(x)}{R_{1}^{NLOS}(x)} = \kappa^{\gamma}x^{\alpha^{s}\gamma}>1 \quad ,\quad for \quad  x>1
	\end{equation*}
	where $ \gamma = \frac{1}{\alpha^{LOS}} - \frac{1}{\alpha^{NLOS}}>0 $ and $ \kappa=\frac{P_{1}}{P_{k}}\times \frac{M_{1}}{M_{k}}\geq1 $. Note that for $ x<1 $, the above results may not be true but in the context of HCN, it rarely happens that distance between typical UE and serving BS be less than $ 1 $ meters. 
	To clarify more on this claim, please refer to Fig. \ref{Remark2_Fig}. This remark will be demonstrated numerically in the section \ref{Numerical Results}.
	\begin{figure}
		\centering
		\includegraphics[width=0.7\textwidth , height=0.45\textwidth]{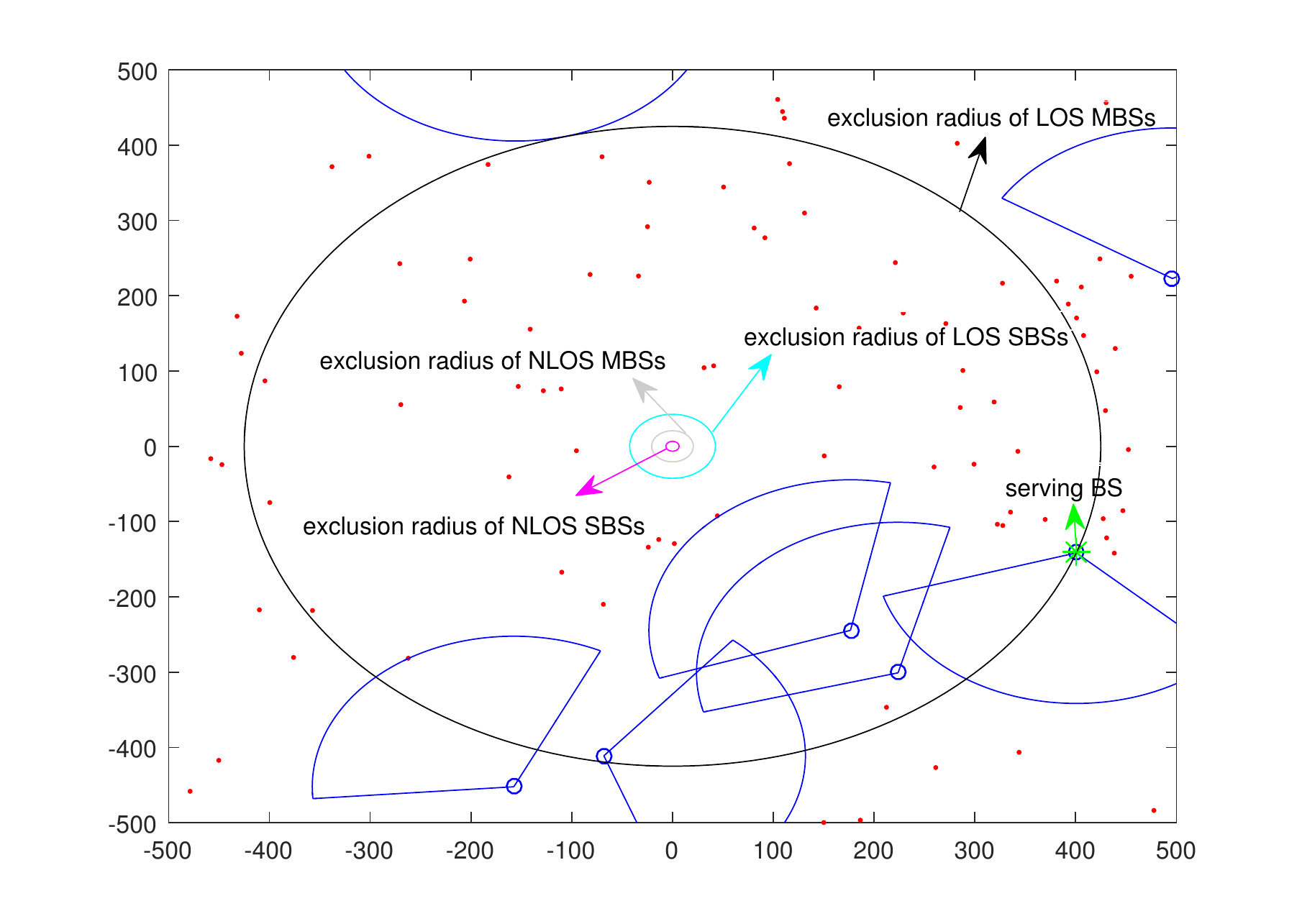}
		\caption{\small An instance of BSs location, holes and exclusion regions.}
		\label{Remark2_Fig}
	\end{figure}

	Another useful fact that can be concluded from Fig. \ref{Remark2_Fig} is that since we have $ R_{1}^{LOS}>>R_{2}^{s}, s\in\{LOS,NLOS\} $, the overlaps between the exclusion region of LOS/NLOS SBSs and LOS holes are negligible and ignoring these overlaps in deriving analytical expressions seems reasonable. However, this claim may not be true for NLOS holes, but due to relatively small exclusion region of LOS/NLOS SBSs, we have ignored the overlaps between these regions and NLOS holes as well. Note that considering the exact non-overlapped region strongly intensify the complexity of analytical expressions and make them even non-tractable. For example, as it can be seen in Fig. \ref{Remark2_Fig}, unlike the microwave two-tier cellular network with two exclusion region, there are four of them corresponding to LOS/NLOS BSs in two tiers and also have different radius by associating with LOS/NLOS BSs of different tiers. In section \ref{Numerical Results}, justification of this approximation will be evaluated.
\end{remark}

We now move to the next section, where the accuracy of our  approaches is evaluated by Monte Carlo simulations.

\section{Numerical Results}\label{Numerical Results}

In this section, we evaluate the analytical expressions using numerical integration and validate the accuracy of the proposed expressions by comparing the simulation results. We assume MBSs and SBSs are operating at 28 GHz and the bandwidth assigned to each UE is BW = 1 GHz. Power path loss law are the same for both tiers, but different for LOS and NLOS links. The LOS probability function is $ P^{LOS}(r) = exp(-\beta r) $, where $ \beta = \sqrt{2}/R^{LOS} $, $ R^{LOS}=200 $ meters, and $ R^{LOS} $ indicates average LOS distance. Two different Setups are considered which have been summarized as follow and in Table \ref{Setups}.
\begin{table}
	\centering
	\caption{Setups}
	\label{Setups}
	\resizebox{.6\textwidth}{!}{
		\begin{tabular}{|c|c|}
			\hline
			Setup 1 & Setup 2 \\ \hline
			$ \lambda_{1} = 2.5 $ $ \text{MBSs/km}^{2} $ & $ \lambda_{1} = 10 $ $ \text{MBSs/km}^{2} $ \\ \hline
			$ \lambda_{2} = 5\lambda_{1} $ $ \text{SBSs/km}^{2} $ & $ \lambda_{2} = 20\lambda_{1} $ $ \text{SBSs/km}^{2} $ \\ \hline
			$ D = 100 $ meters &  $ D = 200 $ meters\\ \hline
			$ \theta_{c}=\frac{\pi}{3} $ & $ \theta_{c}=\frac{2\pi}{3} $\\ \hline
			$ P_{2}=P_{1}-30 $ dB & $ P_{2}=P_{1}-20 $ dB \\ \hline
			$ M_{2}=M_{1} - 5 $ dB & $ M_{2}=M_{1} $\\ \hline
			$ \theta_{2} = \frac{\pi}{6} $ & $ \theta_{2} = \frac{\pi}{3} $\\
			\hline
		\end{tabular}
	}
\end{table}
\begin{itemize}
	\item  $ P_{1} = 53 $ dBm, $ M_{1} = 10 $ dB, $ \theta_{1} = \frac{\pi}{3} $. 
	\item  $ FBR_{k} = 20 $ dB $ m_{k}=M_{k}-FBR_{k} $ ($ k=1,2 $).
	\item $ M_{UE} = 10 $ dB, $ FBR_{UE} = 20 $ dB, $ m_{UE}=M_{UE}-FBR_{UE} $, $ \theta_{UE} = \frac{\pi}{2} $.
	\item $ \alpha^{LOS} = 2 $, $ \alpha^{NLOS} = 4 $.
	\item $ \upsilon^{LOS} = 3 $, $ \upsilon^{NLOS} = 2 $.
	\item $ \sigma^{2}= -174 $ dBm/Hz + $ 10\text{log}_{10}(\text{BW}) $ + $ 10 $ dB.
\end{itemize}

First, the evaluation of distance distribution of nearest LOS/NLOS SBS to typical UE and association probability have been 
provided in Figs. \ref{Distance_Distribution_Fig}, \ref{Association_Probability_Fig} based on parameters adjusted in Setup 2. Note that since MBSs are distributed as PPP, the analytical expression in Lemma \ref{lemma_1} is the true distance distribution of nearest LOS/NLOS MBS to typical UE. So we have only plotted the distance distribution of nearest LOS/NLOS SBS. The numerical results in Fig. \ref{Distance_Distribution_Fig} represent a fairly good match between the analytical expression of Lemma \ref{lemma_1} and simulation. This claim is true for association probability based on numerical and simulation results in Fig. \ref{Association_Probability_Fig}. Moreover, the results in Fig. \ref{Association_Probability_Fig} indicate that due to high power attenuation in NLOS links, typical UE rarely associates to NLOS BSs. Also, as it can be seen, in a dense blocking environment, typical UE is served more likely by LOS SBS and as the density of blocking are decreasing, the probability that typical UE is associated by LOS MBS is more than LOS SBS.

\begin{figure}
	\centering
	\includegraphics[width=0.7\textwidth , height=0.45\textwidth]{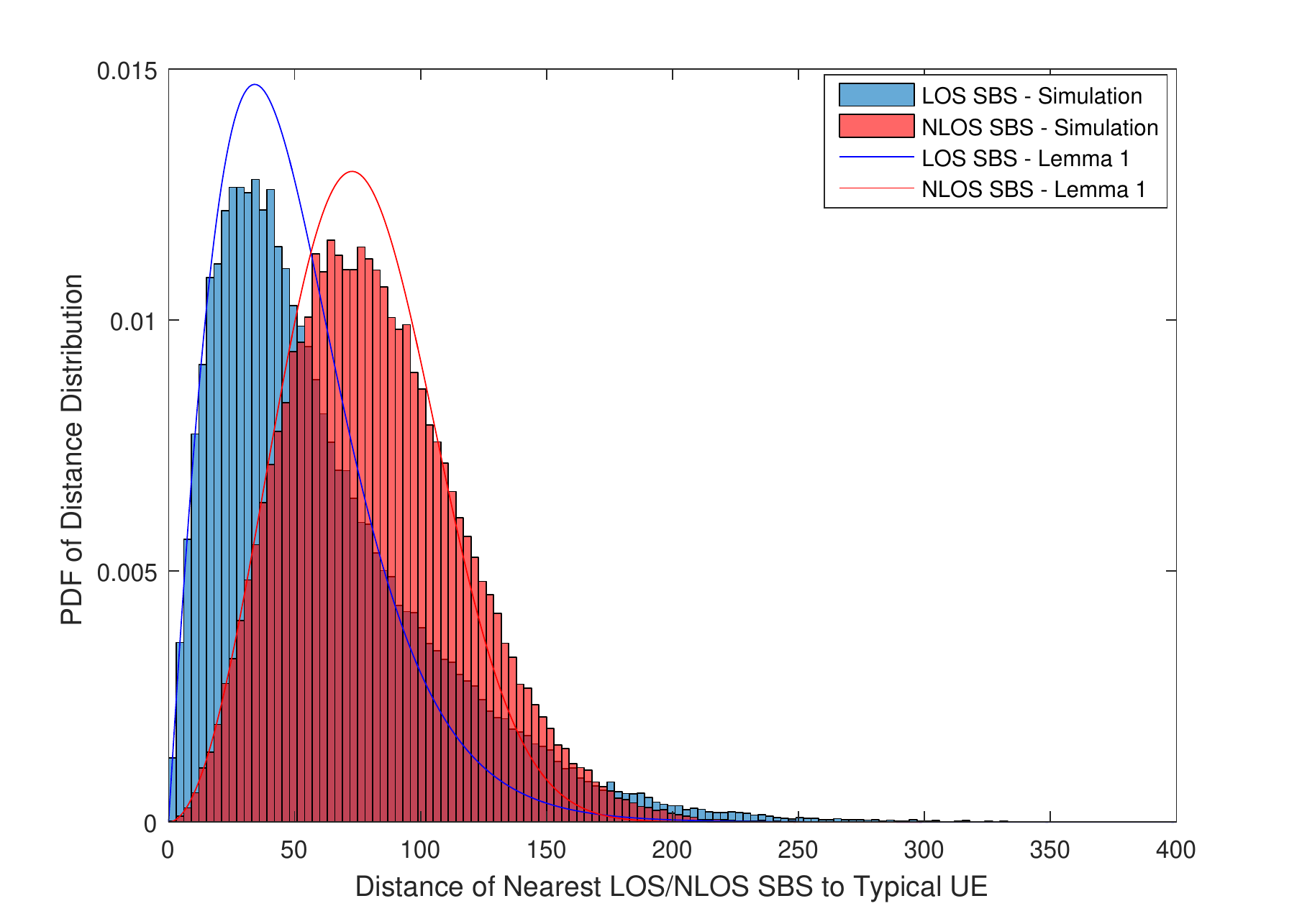}
	\caption{\small PDF of diststance distribution of nearest LOS/NLOS SBS to typical UE.}
	\label{Distance_Distribution_Fig}
\end{figure}
\begin{figure}
	\centering
	\includegraphics[width=0.7\textwidth , height=0.45\textwidth]{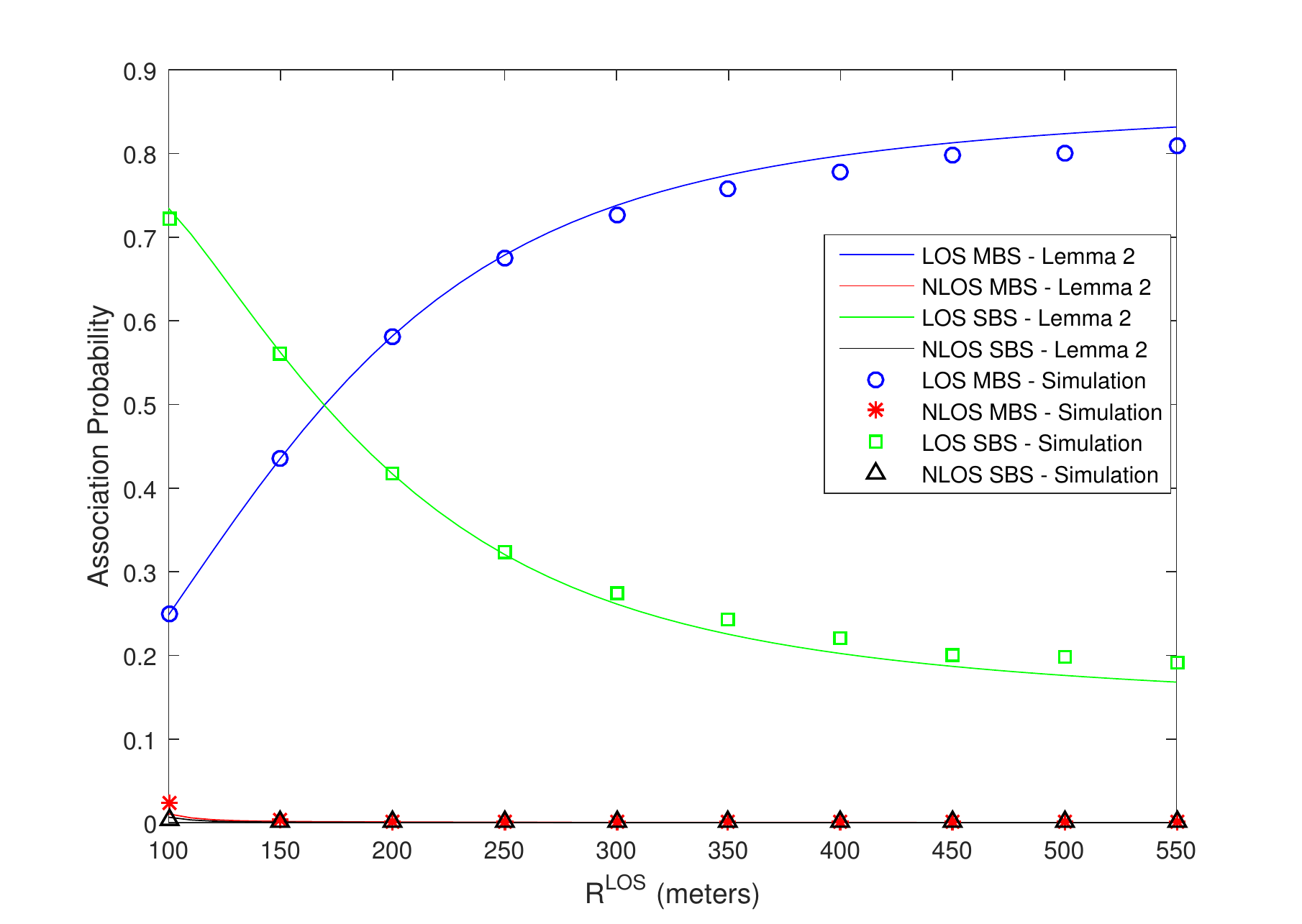}
	\caption{\small Association probability as a function of average LOS distance.}
	\label{Association_Probability_Fig}
\end{figure}

\begin{figure}
	\centering
	\includegraphics[width=0.7\textwidth , height=0.45\textwidth]{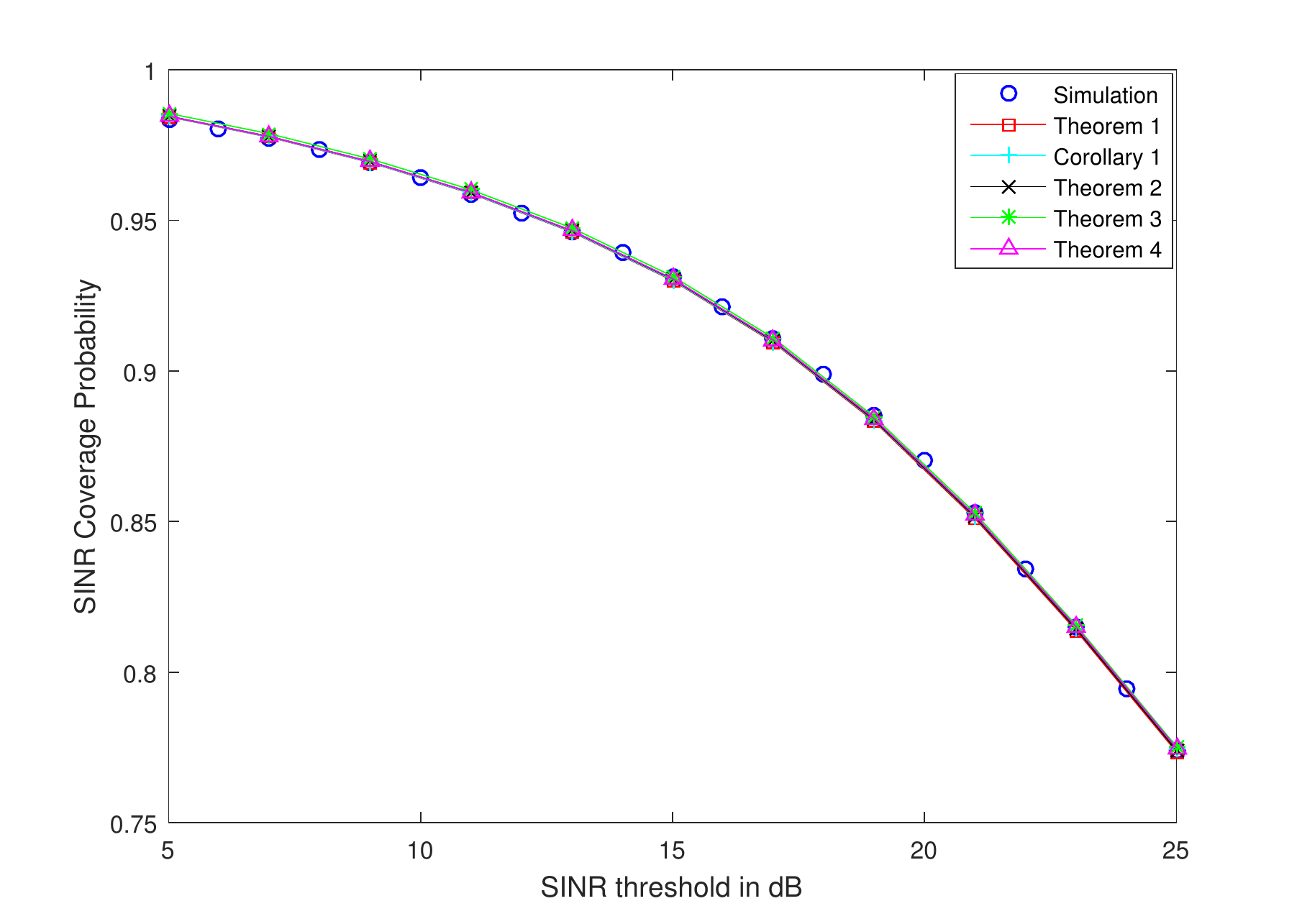}
	\caption{\small SINR coverage probability as a function of SINR threshold- Setup 1.}
	\label{setup1}
\end{figure}
\begin{figure}
	\centering
	\includegraphics[width=0.7\textwidth , height=0.45\textwidth]{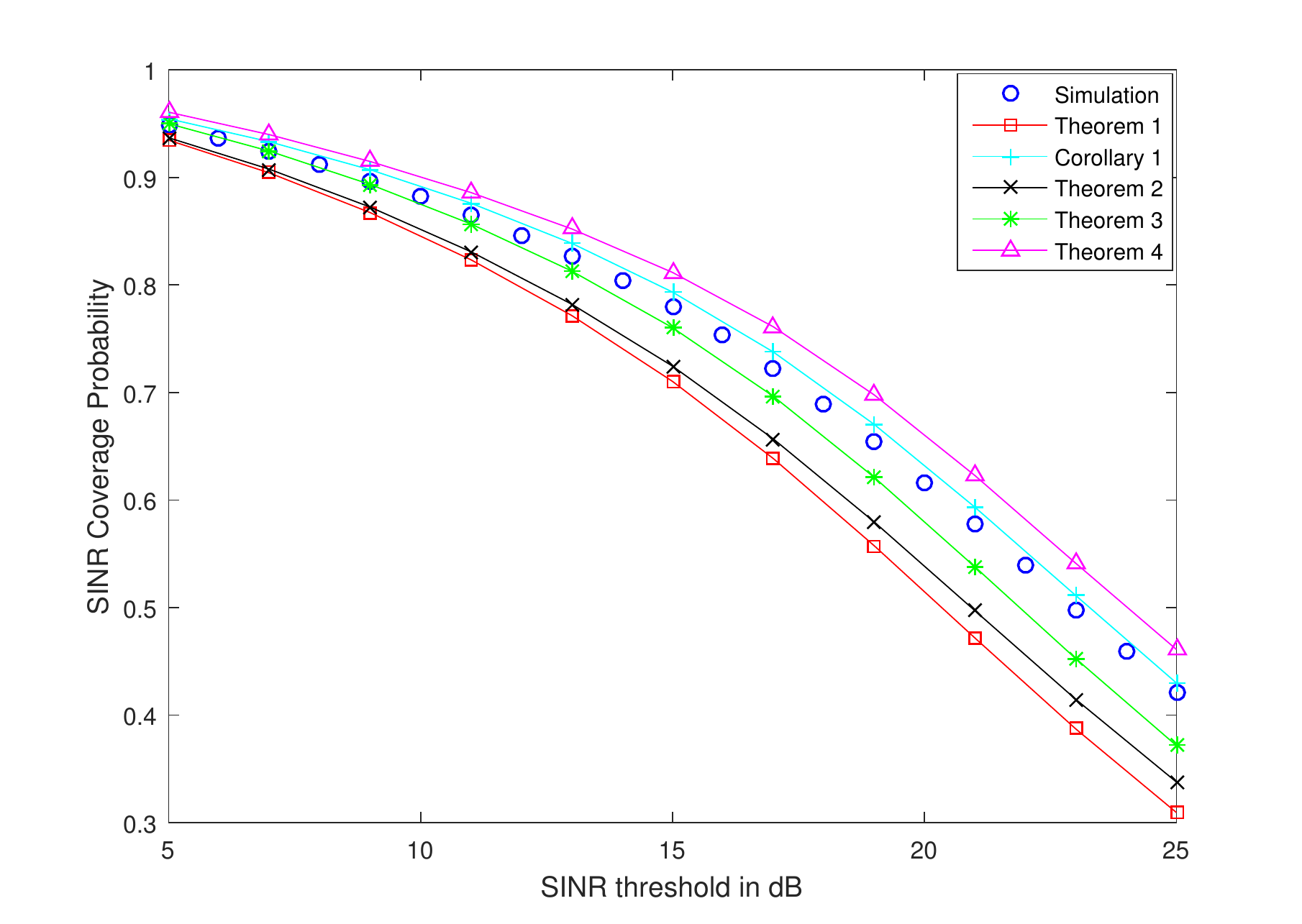}
	\caption{\small SINR coverage probability as a function of SINR threshold- Setup 2.}
	\label{setup2}
\end{figure}

Next, we plot the analytical curves and the simulation results for the SINR coverage probability as a function of SINR threshold for two Setups in Figs. \ref{setup1}, \ref{setup2}. Clearly, by setting system parameters as defined in Setup 1, the effect of holes has less impact on the received signal and all approaches have similar performance. On the other hand in the Setup 2, where the effect of holes is significant, analytical expressions of different approaches deviate from the simulation result. But all the proposed approaches in this paper have superior performance than PPP-based analysis. As it is clear in this figure, the effect of the serving hole on the SINR distribution is not significant. This observation is a direct result of Remark \ref{Remark2} and the fact that has been described in Fig. \ref{Association_Probability_Fig}. Also, it is worth noting that the result in Corollary \ref{corollary1}, unlike similar approach in \cite{7557010}, have been provided a reliable performance. The reason behind this difference is the network model that authors have been considered in \cite{7557010}, which is similar to a simple \textit{ad hoc} setup. So, in the context of mmWave HCN, this approach provides an easy to use and worthy analysis.
To testify the accuracy of the analytical expressions in Theorem \ref{Theorem1}-\ref{Theorem5} and Corollary \ref{corollary1} with circular hole shapes, all approaches have been plotted in Fig. \ref{Circular}, which again confirms the accuracy of our results. To authorize claims in Remark \ref{Remark2}, the analytical expression in Theorem \ref{Theorem5} have been plotted again in Fig. \ref{LOS_NLOS} by varying average LOS distance. The curves in this figure clearly prove that NLOS holes have a more dominant effect on SINR distribution than LOS holes, even in the very sparse building environment.
\begin{figure}
	\centering
	\includegraphics[width=0.7\textwidth , height=0.45\textwidth]{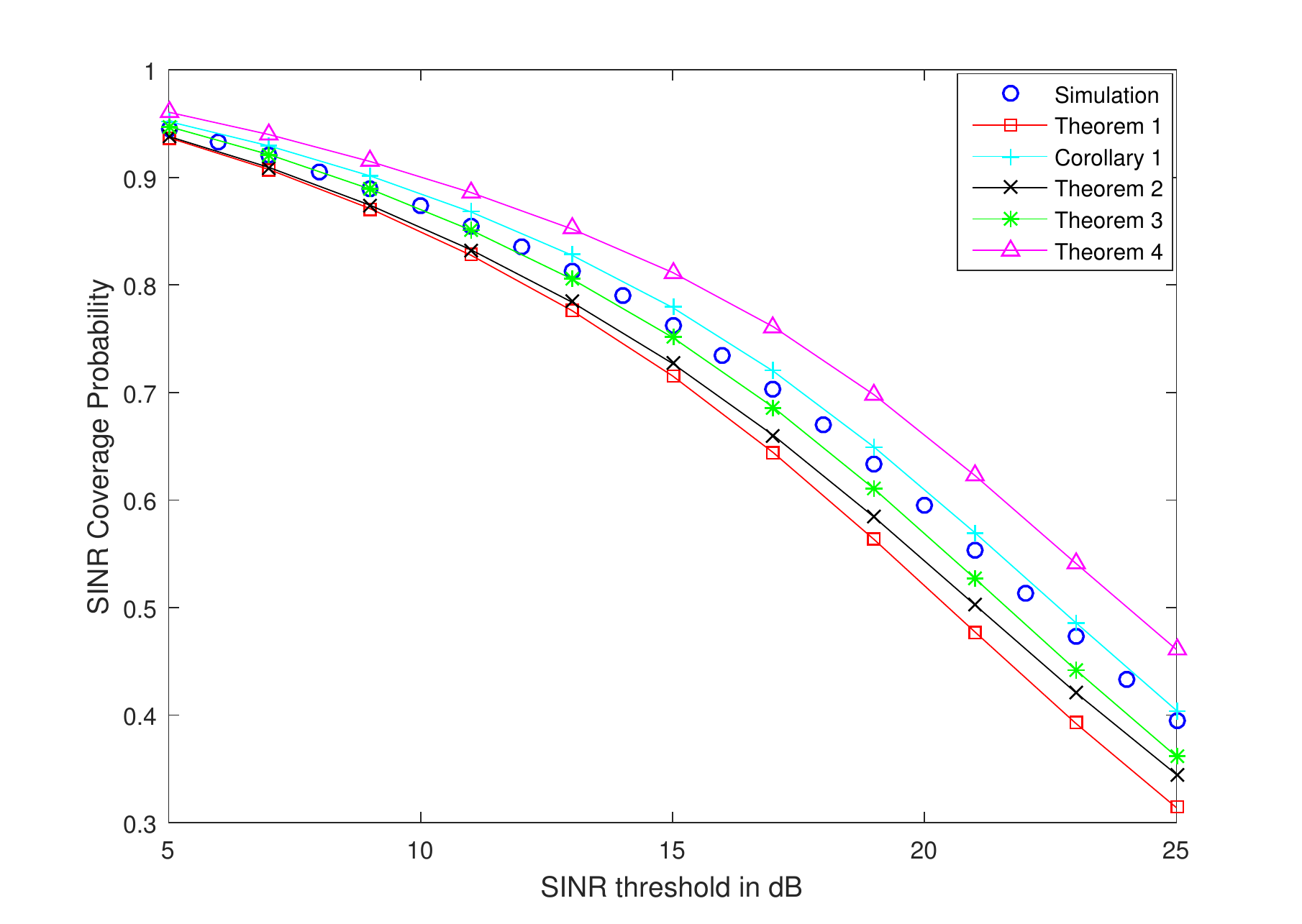}
	\caption{\small SINR coverage probability as a function of SINR threshold- $ \theta_{c} = 2\pi $, $ D=100 $ meters}
	\label{Circular}
\end{figure}
\begin{figure}
	\centering
	\includegraphics[width=0.7\textwidth , height=0.45\textwidth]{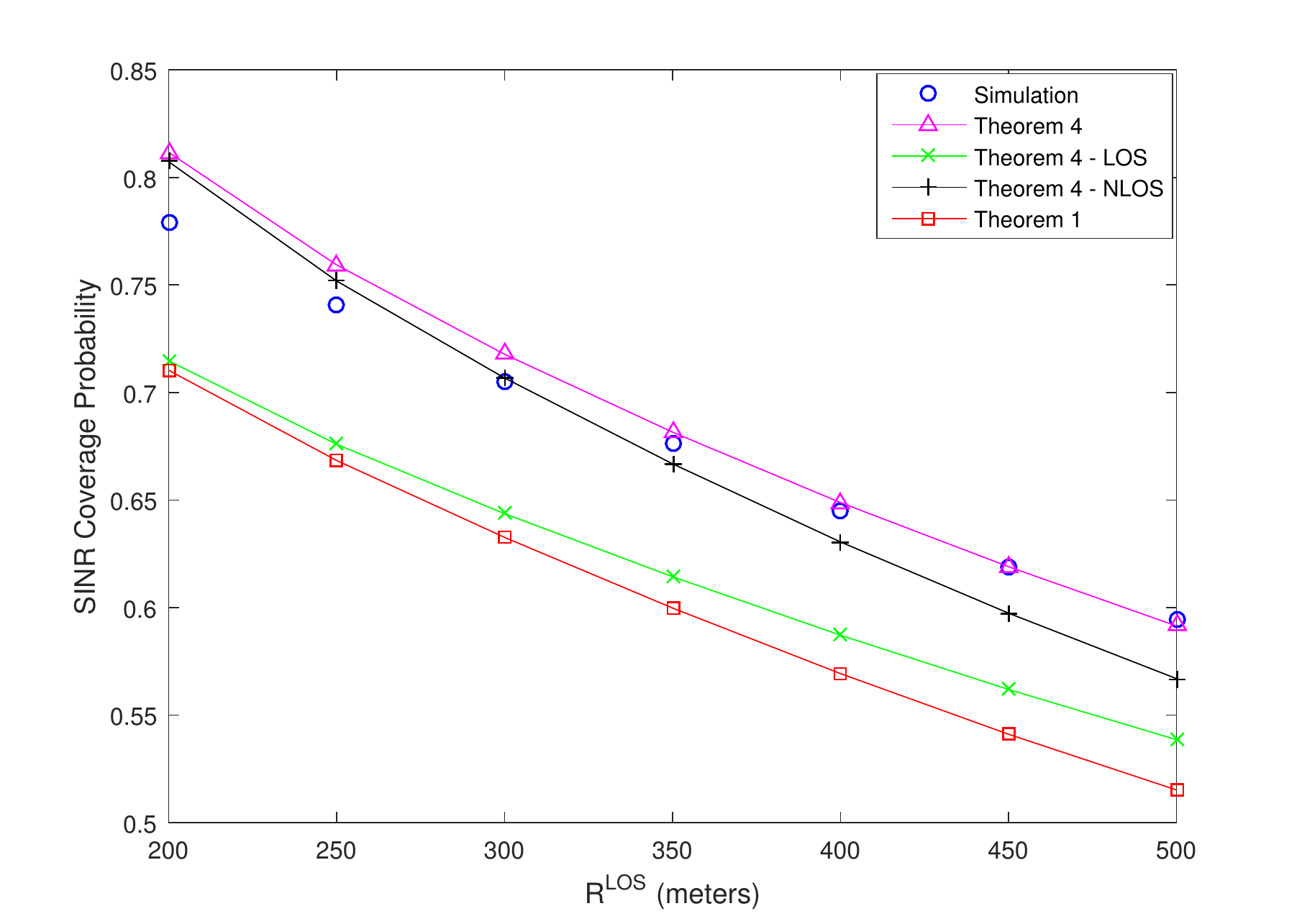}
	\caption{\small SINR coverage probability as a function of average LOS distance.}
	\label{LOS_NLOS}
\end{figure}

\begin{figure}
	\centering
	\includegraphics[width=0.7\textwidth , height=0.45\textwidth]{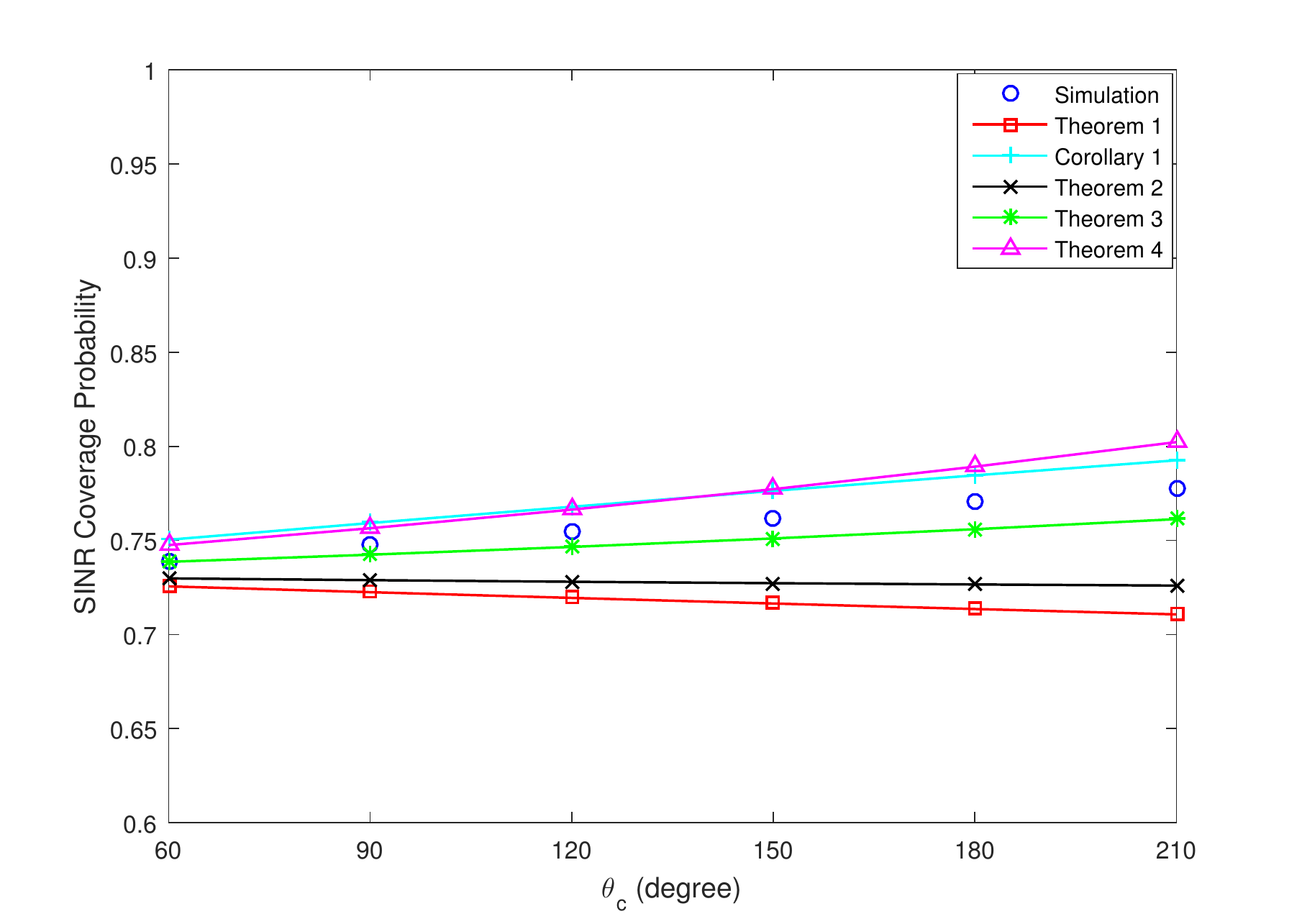}
	\caption{\small SINR coverage probability as a function of central angle of holes.}
	\label{theta_c_Fig}
	\hspace{-7mm}
\end{figure}
\begin{figure}
	\centering
	\includegraphics[width=0.7\textwidth , height=0.45\textwidth]{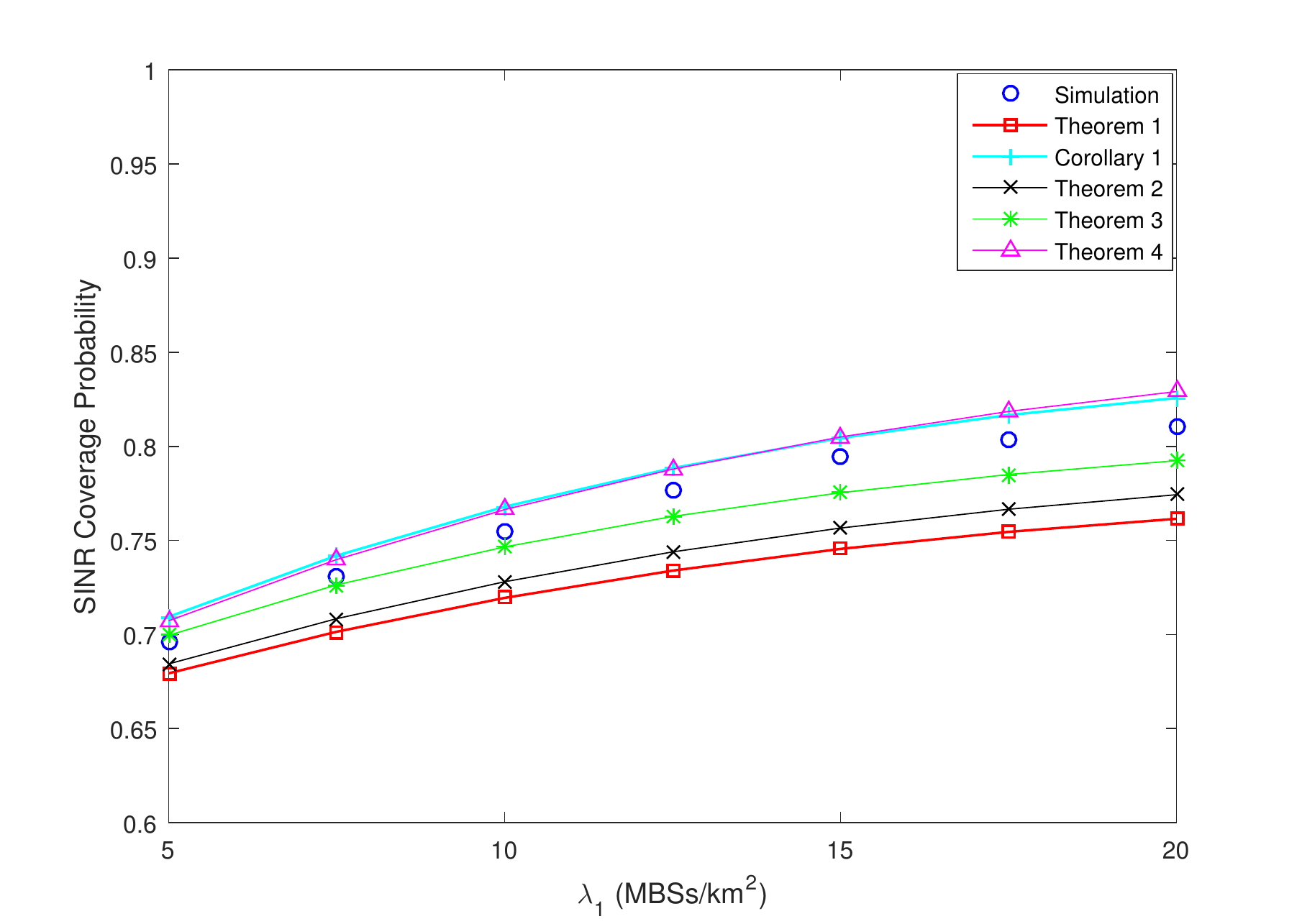}
	\caption{\small SINR coverage probability as a function of MBSs' density.}
	\label{lambda1_Fig}
\end{figure}
\begin{figure}
	\centering
	\includegraphics[width=0.7\textwidth , height=0.45\textwidth]{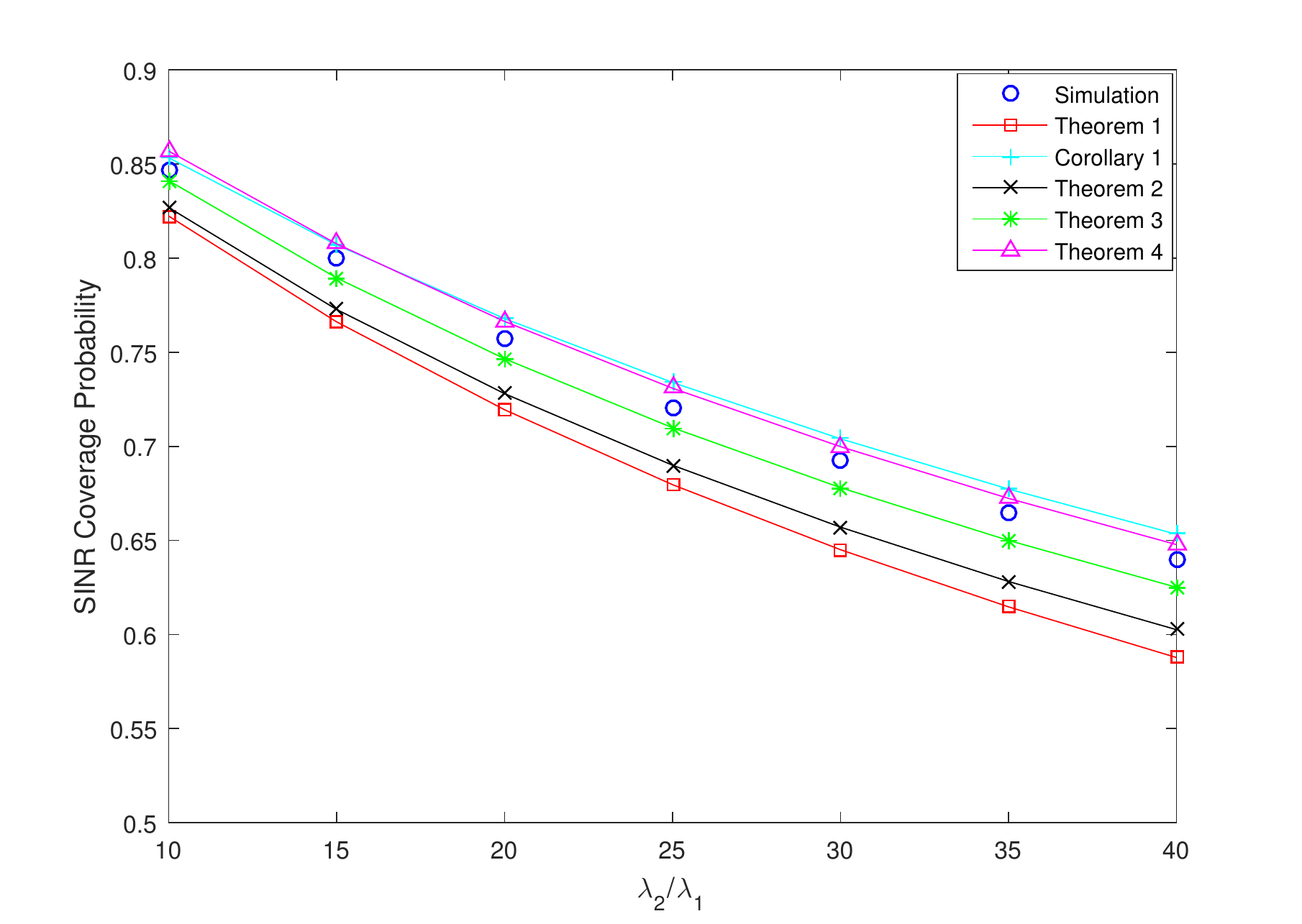}
	\caption{\small SINR coverage probability as a function of $ \lambda_{2}/\lambda_{1} $.}
	\label{lambda2lambda1_Fig}
\end{figure}
Since all results provide a remarkably accurate characterization of the coverage probability in Setup 1, we study only Setup 2 in the next figures, which is the most challenging configuration for evaluating the analytical expressions.
The coverage probability as a function of the central angle of holes have been plotted in Fig. \ref{theta_c_Fig}.
It is clear that the numerical result in Theorem \ref{Theorem1} cannot follow the simulation results and there is a substantial gap between PPP-based analysis and simulation. Comparison of the other proposed results with the simulations reveals that all of them have better performance than PPP-based analysis. Fig. \ref{lambda1_Fig} shows the effect of MBSs' density on the coverage probability. Small values of $ \lambda_{1} $ result in a low density of holes, whereas high values result in a high density of holes. Finally, we plot the coverage probability as a function of $ \lambda_{2}/\lambda_{1} $ in Fig. \ref{lambda2lambda1_Fig}. As was the case in the above results, all proposed analytical expressions work well with a minor error compared to the simulation result and provide better analysis than Theorem \ref{Theorem1}.

\section{Conclusion}\label{Conclusion}
In this paper, a novel PHP-based modeling of wireless networks was proposed. Contrary to the prior PHP models with circular shaped holes, we considered circular sector holes in a random direction. The relation between such hole configuration and circular hole explored and it revealed that the prior PHP models are a special case of ours and analysis based on our model provide a more general scheme than prior works. To capture spatial separation between tiers in mmWave HCN, we modeled SBSs and MBSs location as proposed PHP and PPP. Based on this model, an analytical framework was provided to compute the SINR coverage probability in the downlink of a mmWave two-tier HCN. Due to fundamental physical differences between mmWave and microwave propagation, we used directional beamforming, a blockage model and Nakagami fading to incorporate these differences. At first, fairly accurate analytical expressions for distance distribution of nearest LOS/NLOS BS to typical UE and association probability were derived and validated by simulation results. Then, SINR coverage probability was derived by utilizing some alternative approaches. The performance of these different approaches was compared with the simulation results and numerical results showed a dramatic improvement in the accuracy of our proposed approaches with that of the prior PPP-based analysis for mmWave HCN. Moreover, our analytical and simulation results revealed that NLOS holes have a more dominant effect on SINR distribution than LOS holes in a mmWave HCN. Providing an accurate analysis of the capacity-centric deployments in mmWave HCNs using other point processes, such as PCP and MCP, is a promising direction for future work.



%

\appendix
\subsection{Proof of Lemma \ref{lemma-A_{k}^{s}}} \label{proof of association probability}
Based on the considered cell association rule, the typical UE is associated with a $ s\in\{LOS,NLOS\} $ BS in $ k^{th} $ tier if the following is satisfied
\begin{equation}
\begin{split}
&P_{k}G_{k}r_{k}^{-\alpha^{s}}> P_{j}G_{j}r_{j}^{-\alpha^{s'}}
\overset{a}{=} P_{k}M_{k}M_{UE}r_{k}^{-\alpha^{s}}> P_{j}M_{j}M_{UE}r_{j}^{-\alpha^{s'}}\\
&=P_{k}M_{k}r_{k}^{-\alpha^{s}}> P_{j}M_{j}r_{j}^{-\alpha^{s'}}
= r_{j}>(\frac{P_{j}}{P_{k}}\times \frac{M_{j}}{M_{k}})^{1/\alpha^{s'}}\times r_{k}^{\alpha^{s}/\alpha^{s'}}
\end{split}
\end{equation}
where (a) follows by the serving link directivity gain assumption and $ j=1,2 $, $ s'\in\{LOS,NLOS\} $.

Let us denote the serving tier and LOS or NLOS state of BSs by T and S, respectively.
The typical UE is associated with a $ s\in\{LOS,NLOS\} $  BS in $ k^{th} $ tier if and only if it has a $ s\in\{LOS,NLOS\} $ BS in that tier, and its nearest BS in $ k^{th} $ tier has smaller average power than that of the nearest $ s'\neq s $ BS in $ k^{th} $ tier and the nearest $ s'\in\{LOS,NLOS\} $ BS in $ j\neq k $ tier. Hence, it follows that
\begin{equation}\label{Pr(T=k,S=s|r=x)}
\begin{split}
Pr(T=k,S=s|r_{k}=x)
=& Pr(r_{k}>x^{\alpha^{s}/\alpha^{s'}} , s' \neq s)
Pr(r_{j}>R_{j}^{s'}(x), j \neq k , s' \in{\{LOS,NLOS\}} )\\
=& exp(-\varLambda_{k,PHP}^{s'}([0,x^{\alpha^{s}/\alpha^{s'}}))) exp(- \sum_{s'\in\{LOS,NLOS\}} \varLambda^{s'}_{j,PHP}([0,R_{j}^{s'}(x) )))
\end{split}
\end{equation}
Note that we approximate PHP distribution of SBSs as PPP with $ \lambda_{PHP} $. Therefore, $ A_{k}^{s} $ can be expressed as
\begin{equation}
\begin{split}
A_{k}^{s}&= B_{k}^{s} Pr(T=k,S=s)
= B_{k}^{s} \mathbb{E}_{x}[Pr(T=k,S=s|r_{k}=x)]\\
&= B_{k}^{s} \int_{0}^{\infty} exp(-\varLambda_{k,PHP}^{s'}([0,x^{\alpha^{s}/\alpha^{s'}})))
exp(- \sum_{s'\in\{LOS,NLOS\}} 
\varLambda^{s'}_{j,PHP}([0,R_{j}^{s'}(x) )) )f_{k}^{s}(r_{k}=x)dx\\
&=\int_{0}^{\infty} \varLambda'^{s}_{k,PHP}([0,x))exp(-\sum_{j=1}^{2} \sum_{s'\in\{LOS,NLOS\}}\varLambda^{s'}_{j,PHP}([0,R_{j}^{s'}(x) )) ) dx
\end{split}
\end{equation}

\subsection{Proof of Theorem \ref{Theorem1}}\label{proof of SINR coverage probability-PPP}
SINR coverage probability is
\begin{equation}\label{SINR coverage probability }
P_{C}=Pr(SINR>\tau_{k}) \overset{a}{=} \sum_{k=1}^{2} \sum_{s\in\{LOS,NLOS\}} A_{k}^{s}P_{C_{k}}^{s}(\tau_{k})
\end{equation}
$ P_{C_{k}}^{s}(\tau_{k}) $ is calculated as follow
\begin{equation}\label{conditional coverage probability}
\begin{split}
P_{C_{k}}^{s}(\tau_{k})&=Pr(SINR>\tau_{k}\mid T=k , S=s)= \mathbb{E}_{x}[Pr(SINR>\tau_{k}\mid T=k , S=s,r_{k}=x)]\\
 &= \int_{0}^{\infty} Pr(SINR>\tau_{k}\mid T = k , S=s,r_{k}=x)\times f(r_{k}=x\mid T=k , S=s)dx
\end{split}
\end{equation}
and $ f(r_{k}=x\mid T=k , S=s) $ is
\begin{equation}\label{Conditional PDF}
\begin{split}
f(r_{k}=x\mid T=k , S=s)
&\overset{a}{=} \frac{Pr(T=k , S=s ,\mid r_{k}=x) f^{s}({r_{k}}=x)}{Pr(T=k , S=s)}\\
&\overset{b}{=} \frac{1}{A_{k}^{s}} \varLambda'^{s}_{k,PHP}([0,x))exp(-\sum_{j=1}^{2} \sum_{s'\in\{LOS,NLOS\}}\varLambda^{s'}_{j,PHP}([0,R_{j}^{s'}(x) ))
\end{split}
\end{equation}
where (a) follows from the Bayes theorem and (b) from equation (\ref{Pr(T=k,S=s|r=x)}) in Appendix \ref{proof of association probability} .
In order to complete our derivation for SINR coverage probability, we derive $ Pr(SINR>\tau_{k}\mid T = k , S=s,r_{k}=x) $. Using similar approach in \cite{6932503}, we have
\begin{equation}\label{kkkk}
\begin{split}
&Pr(SINR>\tau_{k}\mid T = k , S=s,r_{k}=x)= Pr(SINR_{k}^{s}>\tau_{k}\mid r_{k}=x)\\ &{\approx} \sum_{n=1}^{\upsilon^{s}} (-1)^{n+1}\binom{\upsilon^{s}}{n} \mathbb{E}[exp(-\mu_{k,n}^{s} \sigma^{2})]L_{I}(\mu_{k,n}^{s})
\end{split}
\end{equation}
where $ \mu_{k,n}^{s} $ was defined in Theorem \ref{Theorem1}. In order to calculate the Laplace transform of interferences $ L_{I}(\mu_{k,n}^{s})=\mathbb{E}[exp(-\mu_{k,n}^{s} I)]$, we define $ I $ as following
\begin{equation}\label{I}
\begin{split}
I=& \sum_{s'\in{\{LOS,NLOS\}}} \sum_{i\in{\{\phi_{1}^{s'}\} \backslash BS_{k,0}}} P_{1} |h_{1,i}|^{2} G_{1,i}r_{1,i}^{-\alpha^{s'}}
+\sum_{s'\in{\{LOS,NLOS\}}} \sum_{i\in{\{\psi^{s'}\} \backslash BS_{k,0}}} P_{2} |h_{2,i}|^{2} G_{2,i}r_{2,i}^{-\alpha^{s'}}\\
=& \sum_{j=1}^{2} \sum_{s'\in{\{LOS,NLOS\}}} \sum_{i\in{\{\phi_{j}^{s'}\} \backslash BS_{k,0}}} P_{j} |h_{j,i}|^{2} G_{j,i}r_{j,i}^{-\alpha^{s'}}
- \sum_{s'\in{\{LOS,NLOS\}}} \sum_{i\in{\{\phi_{2}^{s'}\cap \psi^{s'^{c}}\}}} P_{2} |h_{2,i}|^{2} G_{2,i}r_{2,i}^{-\alpha^{s'}}
\end{split}
\end{equation}
Let us denote
\begin{equation}\label{I_H}
 I_{H} = \sum_{s'\in{\{LOS,NLOS\}}} \sum_{i\in{\{\phi_{2}^{s'}\cap \psi^{s'^{c}}\}}} P_{2} |h_{2,i}|^{2} G_{2,i}r_{2,i}^{-\alpha^{s'}}
\end{equation}
which represents the interference from SBSs of the baseline PPP $ \phi_{2} $ who are inside of holes.

Since in this approach, PHP distribution of SBSs ($ \psi $) is approximated by the baseline PPP, $ \phi_{2} $, $ I_{H}=0 $. Hence, similar to PPP-based approaches like \cite{6932503}
\begin{equation}\label{E_I}
\begin{split}
&L_{I}(\mu_{k,n}^{s}) =\mathbb{E}[exp(-\mu_{k,n}^{s} (\sum_{j=1}^{2} \sum_{s'\in{\{LOS,NLOS\}}} \sum_{i\in{\{\phi_{j}^{s'}\}}}P_{j} |h_{j,i}|^{2} G_{j,i}r_{j,i}^{-\alpha^{s'}}))]\\
&= \prod_{j=1}^{2} \prod_{s'\in\{{LOS,NLOS}\}} exp(-\int_{R_{j}^{s'}(x)}^{\infty} (1- \mathbb{E}[exp(-\mu_{k,n}^{s}P_{j} |h_{j}|^{2} G_{j}r^{-\alpha^{s'}})] ) \varLambda'^{s'}_{j}([0,r)) dr)
\end{split}
\end{equation}
and $ \mathbb{E}[exp(-\mu_{k,n}^{s}P_{j} |h_{j}|^{2} G_{j}r^{-\alpha^{s'}})] $ is
\begin{equation}\label{h_G}
\begin{split}
&\mathbb{E}[exp(-\mu_{k,n}^{s}P_{j} |h_{j}|^{2} G_{j}r^{-\alpha^{s'}})]
\overset{a}{=} \sum_{g=1}^{4} P_{j,g}\mathbb{E}_{|h_{j}|^{2}}[exp(-\mu_{k,n}^{s}P_{j} |h_{j}|^{2} A_{j,g}r^{-\alpha^{s'}})]\\
&\overset{b}{=} \sum_{g=1}^{4} P_{j,g} (\frac{1}{1+(\mu_{k,n}^{s}P_{j} A_{j,g}r^{-\alpha^{s'}})/\upsilon^{s'}})^{\upsilon^{s'}}
\end{split}
\end{equation}
where in (a) expectation are taken over $ G_{j} $, $ P_{j,g} $ and $ A_{j,g} $ are constants defined in Table \ref{Probability mass function of directivity}, and step (b) follows from computing the moment generating function of the gamma distributed random variable $ |h_{j}|^{2} $.
The integration range excludes a ball centered at 0 and radius $R_{j}^{s'}(x)=(\frac{P_{j}}{P_{k}}\times \frac{M_{j}}{M_{k}})^{1/\alpha^{s'}}\times x^{\alpha^{s}/\alpha^{s'}} $ because the closest $ s'\in \{LOS,NLOS\} $ interferer in $ j^{th} $ tier has to be farther than the serving BS, based on cell association rule considered. Finally, by combining the above equations, SINR coverage probability expression given in Theorem \ref{Theorem1} is obtained.

\subsection{Proof of Theorem \ref{Theorem3}}\label{Proof of SINR Coverage Probability-PHP_serving hole}
In the case of incorporating the serving hole, we follow the same approach used in Appendix \ref{proof of SINR coverage probability-PPP} and equation (\ref{SINR coverage probability })-(\ref{kkkk}) are held in this proof too. It is enough to consider the effect of the serving hole in the interference characterization. Therefore, we approximate $ I $ as following, which incorporates the serving hole and ignores other holes
\begin{equation}
\begin{split}
I= \sum_{s'\in{\{LOS,NLOS\}}} \bigg\{\sum_{i\in{\{\phi_{1}^{s'}\} \backslash BS_{k,0}}} P_{1} |h_{1,i}|^{2} G_{1,i}r_{1,i}^{-\alpha^{s'}}+\sum_{i\in{\{\psi^{s'} \cap S^{c}(x,D,\theta_{c})\} \backslash BS_{k,0}}} P_{2} |h_{2,i}|^{2} G_{2,i}r_{2,i}^{-\alpha^{s'}}\bigg\}
\end{split}
\end{equation}
where $ S(x,D,\theta_{c}) $ was defined in Definition \ref{definition 1} and $ x $ is the distance between the serving MBS and the typical UE. Now we calculate $ L_{I}(\mu_{k,n}^{s}) $
\begin{equation}\label{I_serving hole}
\begin{split}
&L_{I}(\mu_{k,n}^{s})\\
&=\mathbb{E}[exp(-\mu_{k,n}^{s} ( \sum_{s'\in{\{LOS,NLOS\}}} \sum_{i\in{\{\phi_{1}^{s'}\}\backslash BS_{k,0}}} P_{1} |h_{1,i}|^{2} G_{1,i}r_{1,i}^{-\alpha^{s'}} + \\
& \sum_{s'\in{\{LOS,NLOS\}}} \sum_{i\in{\{\psi^{s'} \cap S^{c}(x,D,\theta_{c})\} \backslash BS_{k,0}}}P_{2} |h_{2,i}|^{2} G_{2,i}r_{2,i}^{-\alpha^{s'}}))]\\
&\overset{a}{=} \prod_{s'\in\{{LOS,NLOS}\}} \mathbb{E}[\prod_{i\in{\{\phi_{1}^{s'}\} \backslash BS_{k,0}}} exp(-\mu_{k,n}^{s}P_{1} |h_{1,i}|^{2} G_{1,i}r_{1,i}^{-\alpha^{s'}})]\\
& \prod_{s'\in\{{LOS,NLOS}\}} \mathbb{E}[\prod_{i\in{\{\psi^{s'} \cap S^{c}(x,D,\theta_{c})\} \backslash BS_{k,0}}} exp(-\mu_{k,n}^{s}P_{2} |h_{2,i}|^{2} G_{2,i}r_{2,i}^{-\alpha^{s'}})]\\
&\overset{b}{=} \prod_{s'\in\{{LOS,NLOS}\}} exp(-\int_{R_{1}^{s'}(x)}^{\infty} ( 1- \mathbb{E}[exp(-\mu_{k,n}^{s}P_{1} |h_{1}|^{2} G_{1}r^{-\alpha^{s'}})] ) \varLambda'^{s'}_{1}([0,r) dr) \\
& \prod_{s'\in\{{LOS,NLOS}\}} exp(-\int_{R_{2}^{s'}(x)}^{\infty}( 1- \mathbb{E}[exp(-\mu_{k,n}^{s}P_{2} |h_{2}|^{2} G_{2}r^{-\alpha^{s'}})] ) \varLambda'^{s'}_{2}([0,r) dr) \\
&\prod_{s'\in\{{LOS,NLOS}\}} exp(\lambda_{2} \int_{\Xi}(1- \mathbb{E}[exp(-\mu_{1,n}^{s}P_{2} |h_{2}|^{2} G_{2}r^{-\alpha^{s'}})]) P^{s'}(r) dS)\\
&=\prod_{j=1}^{2} \prod_{s'\in\{{LOS,NLOS}\}} exp(-\int_{R_{j}^{s'}(x)}^{\infty} ( 1- \mathbb{E}[exp(-\mu_{k,n}^{s}P_{j} |h_{j}|^{2} G_{j}r^{-\alpha^{s'}})] ) \varLambda'^{s'}_{j}([0,r) dr) \\
&\prod_{s'\in\{{LOS,NLOS}\}} exp(\lambda_{2} \int_{\Xi}(1- \mathbb{E}[exp(-\mu_{1,n}^{s}P_{2} |h_{2}|^{2} G_{2}r^{-\alpha^{s'}})]) P^{s'}(r) dS)
\end{split}
\end{equation}
where (a) is due to the independence assumption among tiers and LOS/NLOS BSs in each tier and (b) follows from the PGFL of a PPP. $ \Xi = S(x,D,\theta_{c})\bigcap B^{c}(0,R_{2}^{s'}(x)) $ and $ B^{c}(0,R_{2}^{s'}(x)) $ represents regions that are outside of a ball centered at the origin with radius $R_{2}^{s'}(x) = (\frac{P_{2}}{P_{k}}\times \frac{M_{2}}{M_{k}})^{1/\alpha^{s'}}\times x^{\alpha^{s}/\alpha^{s'}} $.

\begin{figure}
	\centering
	\includegraphics[width=0.6\textwidth]{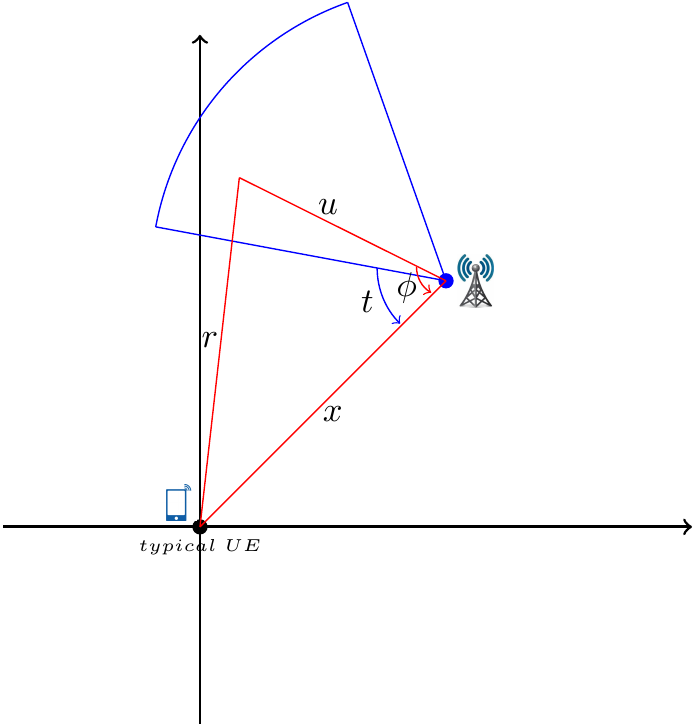}
	\caption{\small Illustration of the effect of a hole in the interference characterization.}
	\label{Coordinate_Systems}
\end{figure}

Next, we need to calculate $ \int_{\Xi} (1- \mathbb{E}[exp(-\mu_{1,n}^{s}P_{2} |h_{2}|^{2} G_{2}r^{-\alpha^{s'}})]) P^{s'}(r) dS $. For this, we use transformation as below
\begin{equation}
r=\sqrt{u^{2}+x^{2}-2uxcos(\phi)}
\end{equation}
where the above equation is derived based on cosine-law, $ u $ and $ \phi $ is defined in Fig. \ref{Coordinate_Systems}.

Following discussion in Remark \ref{Remark2}, we approximate $ \Xi \approx S(x,D,\theta_{c}) $. So, $ \int_{ \Xi}  (1- \mathbb{E}[exp(-\mu_{1,n}^{s}P_{2} |h_{2}|^{2} G_{2}r^{-\alpha^{s'}})]) P^{s'}(r) dS $ is
\begin{equation}
\begin{split}
&\int_{\Xi} (1- \mathbb{E}[exp(-\mu_{1,n}^{s}P_{2} |h_{2}|^{2} G_{2}r^{-\alpha^{s'}})]) P^{s'}(r) dS)\\
& \approx \int_{t}^{t+\theta_{c}} \int_{0}^{D} (1- \mathbb{E}[exp(-\mu_{1,n}^{s}P_{j} |h_{2}|^{2} G_{2}(u^{2}+x^{2}-2uxcos(\phi))^{-\alpha^{s'}/2})])\\
&P^{s'}(\sqrt{u^{2}+x^{2}-2uxcos(\phi)}) udud\phi)
\end{split}
\end{equation}
Since the direction of circular sectors have a uniform distribution in $ [0,2\pi) $, we have 
\begin{equation}
\begin{split}
&\int_{\Xi} (1- \mathbb{E}[exp(-\mu_{1,n}^{s}P_{2} |h_{2}|^{2} G_{2}r^{-\alpha^{s'}})]) P^{s'}(r) dS)\\
& \approx \int_{0}^{2\pi} \int_{t}^{t+\theta_{c}} \int_{0}^{D} (1- \mathbb{E}[exp(-\mu_{1,n}^{s}P_{j} |h_{2}|^{2} G_{2}(u^{2}+x^{2}-2uxcos(\phi))^{-\alpha^{s'}/2})])\\
&P^{s'}(\sqrt{u^{2}+x^{2}-2uxcos(\phi)}) \frac{1}{2\pi} udud\phi dt)\\
& \overset{a}{=} \frac{\theta_{c}}{2\pi} \int_{0}^{2\pi} \int_{0}^{D} (1- \mathbb{E}[exp(-\mu_{1,n}^{s}P_{j} |h_{2}|^{2} G_{2}(u^{2}+x^{2}-2uxcos(\phi))^{-\alpha^{s'}/2})])\\
&P^{s'}(\sqrt{u^{2}+x^{2}-2uxcos(\phi)}) udud\phi)
\end{split}
\end{equation}
where (a) is derived by substituting integrals and using this fact that integrand function is periodic respect to $ \phi $ with period $ 2\pi $. The above two-fold integral represents area of circular hole. Comparing results derived in \cite{7557010} for circular holes and using similar approach in equation (\ref{h_G}) to derive $ \mathbb{E}[exp(-\mu_{1,n}^{s}P_{2} |h_{2}|^{2} G_{2}(u^{2}+x^{2}-2uxcos(\phi))^{-\alpha^{s'}/2})] $, this integral can be expressed as follow
\begin{equation}
	\int_{x-D}^{x+D} F(\upsilon^{s'} , \frac{\mu_{k,n}^{s} P_{2}A_{2,g}u^{-\alpha^{s'}} }{\upsilon^{s'}}) 2\pi\lambda(u)
	P^{s'}(u) udu
\end{equation}
where $ \lambda(u) $ is defined in Theorem \ref{Theorem3}.
Therefore, SINR coverage probability by incorporating serving hole can be expressed as Theorem \ref{Theorem3}.

\subsection{Proof of Theorem \ref{Theorem4}}\label{Proof of SINR Coverage Probability-PHP_nearest_holes}
Similar to the proof used in Appendix \ref{Proof of SINR Coverage Probability-PHP_serving hole}, the approximation of the interference in this case is
\begin{equation}
I= \sum_{s'\in{\{LOS,NLOS\}}} \bigg\{\sum_{i\in{\{\phi_{1}^{s'}\} \backslash BS_{k,0}}} P_{1} |h_{1,i}|^{2} G_{1,i}r_{1,i}^{-\alpha^{s'}} + \sum_{i\in{\{\psi^{s'} \cap \Omega^{c}\} \backslash BS_{k,0}}} P_{2} |h_{2,i}|^{2} G_{2,i}r_{2,i}^{-\alpha^{s'}}\bigg\}
\end{equation}
where $ \Omega = \bigcup_{ s''\in{\{LOS,NLOS\}} } S(y,D,\theta_{c}) $ and $ y $ is the distance between $ s''\in{\{LOS,NLOS\}} $ interferer MBS and the typical UE. We ignore possible overlaps between two holes and approximate $ \Omega $ as $ \Omega \approx \sum_{ s''\in{\{LOS,NLOS\}} } S(y,D,\theta_{c}) $. Similar to equation (\ref{I_serving hole}), $ L_{I}(\mu_{k,n}^{s}) $ is
\begin{equation}
\begin{split}
&L_{I}(\mu_{k,n}^{s}|y) \\
&= \prod_{j=1}^{2} \prod_{s'\in\{{LOS,NLOS}\}} exp(-\int_{R_{j}^{s'}(x)}^{\infty} ( 1- \mathbb{E}[exp(-\mu_{k,n}^{s}P_{j} |h_{j}|^{2} G_{j}r^{-\alpha^{s'}})] ) \varLambda'^{s'}_{j}([0,r) dr) \\
&\prod_{s'\in\{{LOS,NLOS}\}} exp(\lambda_{2} \int_{\Omega}(1- \mathbb{E}[exp(-\mu_{k,n}^{s}P_{2} |h_{2}|^{2} G_{2}r^{-\alpha^{s'}})]) P^{s'}(r) dS)
\end{split}
\end{equation}

Note that the exact region for the integral is $ \Omega \bigcap B^{c}(0,R_{2}^{s'}(x)) $, but we approximate it by $ \Omega $.
Similar to proof in Appendix \ref{Proof of SINR Coverage Probability-PHP_serving hole},  $ \int_{\Omega}
(1- \mathbb{E}[exp(-\mu_{1,n}^{s}P_{2} |h_{2}|^{2} G_{2}r^{-\alpha^{s'}})]) P^{s'}(r) dS $ can be calculated as below
\begin{equation}
\begin{split}
&\int_{\Omega} (1- \mathbb{E}[exp(-\mu_{1,n}^{s}P_{2} |h_{2}|^{2} G_{2}r^{-\alpha^{s'}})]) P^{s'}(r) dS\\
& \approx \sum_{s''\in\{{LOS,NLOS}\}} \frac{\theta_{c}}{2\pi} \int_{0}^{2\pi} \int_{0}^{D} (1- \mathbb{E}[exp(-\mu_{k,n}^{s}P_{2} |h_{2}|^{2} G_{2}(u^{2}+y^{2}-2uycos(\phi))^{-\alpha^{s'}/2})])\\
&P^{s'}(\sqrt{u^{2}+y^{2}-2uycos(\phi)})udud\phi
\end{split}
\end{equation}
Now, to complete our derivation, we need to calculate PDF of $ y $. Given that the typical UE is associated with a $ s\in \{LOS,NLOS\} $ BS in $ k^{th} $ tier at distance $ r_{k}=x $ and observes at least one $ s''\in \{LOS,NLOS\} $ MBS, CCDF of y is
\begin{equation}
\begin{split}
&\bar{F}^{s''}(y|T=k,S=s,r_{k}=x)= Pr(Y>y|T=k,S=s,r_{k}=x) \\
&= Pr(\mathbb{N}(B(0,y) \backslash B(0,R_{1}^{s''}(x)) =\varnothing)=\frac{ exp(-\varLambda_{1}^{s''}([0,y)))exp(\varLambda_{1}^{s''}([0,R_{1}^{s''}(x))))}
{1-exp(-\varLambda_{1}([R_{1}^{s''}(x),\infty)))}
\end{split}
\end{equation}

where $ \mathbb{N} $ represents the number of points in $ \phi_{1} $ that are in the desired set. PDF of $ y $ now follows by differentiating the above expression
\begin{equation}
\begin{split}
&f^{s''}(y|T=k,S=s,r_{k}=x)= -\dfrac{d}{dy}\bar{F}^{s''}(y|T=k,S=s,r_{k}=x)\\ 
&=\frac{\varLambda'^{s''}_{1}([0,y)) exp(-\varLambda_{1}^{s''}([0,y))+\varLambda_{1}^{s''}([0,R_{1}^{s''}(x))))}
{1-exp(-\varLambda_{1}([R_{1}^{s''}(x),\infty)))}
\end{split}
\end{equation}

Finally $ L_{I}(\mu_{k,n}^{s}) $ can be expressed as
\begin{equation}
\begin{split}
&L_{I}(\mu_{k,n}^{s}) \\
&= \prod_{j=1}^{2} \prod_{s'\in\{{LOS,NLOS}\}} exp(-\int_{R_{j}^{s'}(x)}^{\infty}( 1- \mathbb{E}[exp(-\mu_{k,n}^{s}P_{j} |h_{j}|^{2} G_{j}r^{-\alpha^{s'}})] ) \varLambda'^{s'}_{j}([0,r)) dr)
\\
&\prod_{s'\in\{{LOS,NLOS}\}} \prod_{s''\in\{{LOS,NLOS}\}}\\
&\int_{R_{1}^{s''}(x)}^{\infty}exp(\frac{\theta_{c}}{2\pi}\lambda_{2} \int_{0}^{2\pi} \int_{0}^{D}(1- \mathbb{E}[exp(-\mu_{k,n}^{s}P_{2} |h_{2}|^{2} G_{2}(u^{2}+y^{2}-2uycos(\phi))^{-\alpha^{s'}/2})])\\
&P^{s'}(\sqrt{u^{2}+y^{2}-2uycos(\phi)})udud\phi)f^{s''}(y|T=k,S=s,r_{k}=x)dy
\end{split}
\end{equation}

Similar to equation (\ref{h_G}), $ \mathbb{E}[exp(-\mu_{k,n}^{s}P_{j} |h_{j}|^{2} G_{j}r^{-\alpha^{s'}})] $ and $ \mathbb{E}[exp(-\mu_{k,n}^{s}P_{2} |h_{2}|^{2} G_{2}(u^{2}+y^{2}-2uycos(\phi))^{-\alpha^{s'}/2})] $ can be calculated. Therefore, SINR coverage probability by incorporating the nearest non-serving LOS and NLOS can be obtained.

\subsection{Proof of Theorem \ref{Theorem5}}\label{Proof of SINR Coverage Probability-PHP_all_holes}
The exact expression for the interference in our proposed two-tier HCN model is
\begin{equation}
I= \sum_{s'\in{\{LOS,NLOS\}}} \bigg\{\sum_{i\in{\{\phi_{1}^{s'}\} \backslash BS_{k,0}}} P_{1} |h_{1,i}|^{2} G_{1,i}r_{1,i}^{-\alpha^{s'}} + \sum_{i\in{\{\phi_{2}^{s'} \cap \Gamma^{c}\} \backslash BS_{k,0}}} P_{2} |h_{2,i}|^{2} G_{2,i}r_{2,i}^{-\alpha^{s'}}\bigg\}
\end{equation}
where $ \Gamma = \bigcup_{y\in{\phi_{1}}} S(y,D,\theta_{c}) $. However, due to the possible overlaps between the holes, the exact characterization of $ I $ is complex. Therefore, we approximate $ \Gamma $ as $ \Gamma \approx \sum_{s''\in{\{LOS,NLOS\}}} \sum_{y\in{\phi_{1}^{s''}}} S(y,D,\theta_{c}) $, which ignores possible overlaps between the holes. Using this assumption, $ L_{I}(\mu_{k,n}^{s}) $ can be evaluated as 
\begin{equation}
\begin{split}
&L_{I}(\mu_{k,n}^{s}) \\
&\approx \prod_{j=1}^{2} \prod_{s'\in{\{LOS,NLOS\}}} exp(-\int_{R_{j}^{s'}(x)}^{\infty}( 1- \mathbb{E}[exp(-\mu_{k,n}^{s}P_{j} |h_{j}|^{2} G_{j}r^{-\alpha^{s'}})] ) \varLambda'^{s'}_{j}([0,r) dr) \\
& \prod_{s'\in\{{LOS,NLOS}\}} \mathbb{E}_{\phi_{1}}  [exp(\lambda_{2} \int_{\Gamma}(1- \mathbb{E}[exp(-\mu_{k,n}^{s}P_{2} |h_{2}|^{2} G_{2}r^{-\alpha^{s'}})]) P^{s'}(r) dS)]\\
& \approx exp (-\sum_{j=1}^{2} \sum_{s'\in\{LOS,NLOS\}} W_{j}^{s'}(x))\prod_{s'\in\{{LOS,NLOS}\}}\\
& \mathbb{E}_{\phi_{1}^{s''}} [ exp( \sum_{s''\in{\{LOS,NLOS\}}} \sum_{y\in{\phi_{1}^{s''}}}  \lambda_{2} \int_{S(y,D,\theta_{c})} (1- \mathbb{E}[exp(-\mu_{k,n}^{s}P_{2} |h_{2}|^{2} G_{2}r^{-\alpha^{s'}})]) P^{s'}(r) dS)]\\
& \approx exp (-\sum_{j=1}^{2} \sum_{s'\in\{LOS,NLOS\}} W_{j}^{s'}(x))\prod_{s'\in\{{LOS,NLOS}\}} \prod_{s''\in\{{LOS,NLOS}\}}\\
& \mathbb{E}_{\phi_{1}^{s''}} [\prod_{y\in{\phi_{1}^{s''}}} exp( \lambda_{2}\int_{S(y,D,\theta_{c})} (1- \mathbb{E}[exp(-\mu_{k,n}^{s}P_{2} |h_{2}|^{2} G_{2}r^{-\alpha^{s'}})]) P^{s'}(r) dS)]\\ 
& \approx exp (-\sum_{j=1}^{2} \sum_{s'\in\{LOS,NLOS\}} W_{j}^{s'}(x))\prod_{s'\in\{{LOS,NLOS}\}} \prod_{s''\in\{{LOS,NLOS}\}}\\
& exp(-\int_{R_{1}^{s''}(x)}^{\infty}  (1-  exp( \lambda_{2}\int_{S(y,D,\theta_{c})}
(1- \mathbb{E}[exp(-\mu_{k,n}^{s}P_{2} |h_{2}|^{2} G_{2}r^{-\alpha^{s'}})]) P^{s'}(r) dS)]))\varLambda_{1}^{s''}([0,y)) dy ) \\
& \approx exp (-\sum_{j=1}^{2} \sum_{s'\in\{LOS,NLOS\}} W_{j}^{s'}(x))\\
& exp(- \sum_{ s''\in{\{LOS,NLOS\}} } \int_{R_{1}^{s''}(x)}^{\infty}\sum_{ s'\in{\{LOS,NLOS\}} }  (1-exp(Q^{s'}(y)) )  \varLambda_{1}^{s''}([0,y)) dy )
\end{split}
\end{equation}

Note that similar to prior proofs, the exact region for the integral  is $ \Gamma \cap B^{c}(0,R_{2}^{s'}(x)) $, but we approximate it by $ \Gamma $.


\ifCLASSOPTIONcaptionsoff
  \newpage
\fi



%

\bibliographystyle{IEEEtran}
\bibliography{IEEEabrv}

\vfill
%








\end{document}